\newtheorem*{theorem*}{Theorem}
\newtheorem*{definition*}{Definition}
\newtheorem{theorem}{Theorem}
\newtheorem{lemma}{Lemma}
\renewcommand{\algorithmiccomment}[1]{\bgroup\hfill\footnotesize~#1\egroup}
\newcommand{\algone}{{\textsc{Blits}}}
\newcommand{\algtwo}{{\textsc{Blits+}}}
\newcommand{\filter}{{\textsc{Sieve}}}
\newcommand{\INPUT}{\item[{\bf Input:}]}
\DeclareMathOperator{\E}{\mathbb{E}}
\newcommand{\EU}[1][]{\underset{#1}{\E}}
\DeclareMathOperator{\argmax}{argmax}
\newcommand{\OPT}{\texttt{OPT}}
\newcommand{\R}{\mathbb{R}}                     
\newcommand{\U}{\mathcal{U}}
\newcommand{\D}{\mathcal{D}}
\renewcommand{\O}{\mathcal{O}}
\title{Non-monotone Submodular Maximization\\ in Exponentially Fewer Iterations}
\author{Eric Balkanski\\Harvard University \\ ericbalkanski@g.harvard.edu 
\and Adam Breuer \\ Harvard University \\ breuer@g.harvard.edu
\and Yaron Singer\\ Harvard University \\ yaron@seas.harvard.edu\\}
\date{}
\begin{document}

\setcounter{page}{0}

\maketitle
\begin{abstract}
In this paper we consider parallelization for applications whose objective can be expressed as maximizing a non-monotone submodular function under a cardinality constraint.
Our main result is an algorithm whose approximation is
arbitrarily close to $1/2e$ in $\O(\log^2 n)$ adaptive rounds, where $n$ is the size of the ground set. 
This is an exponential speedup in parallel running time over any previously studied algorithm for constrained non-monotone submodular maximization.
Beyond its provable guarantees, the algorithm performs well in practice.  Specifically, experiments on traffic monitoring and personalized data summarization applications show that the algorithm finds solutions whose values are competitive with state-of-the-art algorithms while running in exponentially fewer parallel iterations. 
\end{abstract}

\newpage

\section{Introduction}
In machine learning, many fundamental quantities we care to optimize such as entropy, graph cuts, diversity, coverage, diffusion, and clustering are submodular functions.
Although there has been a great deal of work in machine learning on applications that require constrained \emph{monotone} submodular maximization, many interesting submodular objectives are non-monotone.  Constrained non-monotone submodular maximization is used in large-scale personalized data summarization applications such as image summarization, movie recommendation, and revenue maximization in social networks~\cite{mirzasoleiman2016fast}.  In addition, many data mining applications on networks require solving constrained max-cut problems (see Section~\ref{sec:experiments}).

Non-monotone submodular maximization is well-studied~\cite{feige2011maximizing,lee2009non,gupta2010constrained, feldman2011unified, gharan2011submodular, Buch14, chekuri2015multiplicative, mirzasoleiman2016fast,ene2016constrained}, particularly under a cardinality constraint \cite{lee2009non,gupta2010constrained, gharan2011submodular, Buch14, mirzasoleiman2016fast}. For maximizing a non-monotone submodular function under a cardinality constraint $k$, a simple randomized greedy algorithm that iteratively includes a random element from the set of $k$ elements with largest marginal contribution at every iteration achieves a $1/e$ approximation to the optimal set of size $k$~\cite{Buch14}.  For more general constraints, Mirzasoleiman et al. develop an algorithm with strong approximation guarantees that works well in practice~\citep{mirzasoleiman2016fast}.  
While the algorithms for constrained non-monotone submodular maximization achieve strong approximation guarantees, their parallel runtime is linear in the size of the data due to their high \emph{adaptivity}. Informally, the adaptivity of an algorithm is the number of sequential rounds it requires when polynomially-many function evaluations can be executed in parallel in each round.  The adaptivity of the randomized greedy algorithm is $k$ since it sequentially adds elements in $k$ rounds.  The algorithm in Mirzasoleiman et al. is also $k$-adaptive, as is any known constant approximation algorithm for constrained non-monotone submodular maximization.  In general, $k$ may be $\Omega(n)$, and hence the adaptivity as well as the parallel runtime of all known constant approximation algorithms for constrained submodular maximization are at least \emph{linear} in the size of the data. 


For large-scale applications we seek algorithms with \emph{low} adaptivity. Low adaptivity is what enables algorithms to be efficiently parallelized (see Appendix~\ref{sec:pram} for further discussion).   For this reason, adaptivity is studied across a wide variety of areas including online learning \cite{namkoong2017adaptive}, ranking \cite{Val75, Col88, BMW16}, multi-armed bandits \cite{AAAK17}, sparse recovery \cite{HNC09,IPW11, haupt2009compressive}, learning theory  \cite{CG17, buhrman2012non,chen2017settling}, and communication complexity  \cite{papadimitriou1984communication, duris1984lower, nisan1991rounds}.  For submodular maximization, somewhat surprisingly, until very recently $\Omega(n)$ was the best known adaptivity (and hence best parallel running time) required for a constant factor approximation to monotone submodular maximization under a cardinality constraint.

A recent line of work introduces new techniques for maximizing \emph{monotone} submodular functions under a cardinality constraint that produce algorithms that are $\mathcal{O}(\log n)$-adaptive and achieve strong constant factor approximation guarantees~\cite{BS18,BS18b} and even optimal approximation guarantees in  $\O(\log n)$ rounds~\cite{BRS18,EN18}.  This is tight in the sense that no algorithm can achieve a constant factor approximation with $\tilde{o}(\log n)$ rounds~\cite{BS18}.  Unfortunately, these techniques are only applicable to monotone submodular maximization and can be arbitrarily bad in the non-monotone case.
\begin{center}
\emph{
Is it possible to design fast parallel algorithms for non-monotone submodular maximization?}
\end{center}
%
For unconstrained non-monotone submodular maximization, one can trivially obtain an approximation of $1/4$ in $0$ rounds by simply selecting a set uniformly at random~\cite{feige2011maximizing}.  We therefore focus on the problem of maximizing a non-monotone submodular function under a cardinality constraint.  

\paragraph{Main result.}  Our main result is the \algone \ algorithm, which obtains an approximation ratio arbitrarily close to $1/2e$ for maximizing a non-monotone (or monotone) submodular function under a cardinality constraint in $\O(\log^2 n)$ adaptive rounds (and $\O(\log^3 n)$ parallel runtime --- see Appendix~\ref{sec:pram}), where $n$ is the size of the ground set.  Although its approximation ratio is about half of the best known approximation for this problem~\cite{Buch14}, it achieves its guarantee in exponentially fewer rounds.  Furthermore, we observe across a variety of experiments that despite this slightly weaker worst-case guarantee, \algone \ consistently returns solutions that are competitive with the state-of-the-art.

\paragraph{Technical overview.} Non-monotone  submodular functions are  notoriously challenging  to optimize. Unlike in the monotone case,  standard algorithms  for submodular maximization such as the greedy algorithm perform arbitrarily poorly on non-monotone functions, and the best achievable approximation remains unknown.\footnote{To date, the best upper and lower bounds are ~\cite{Buch14} and ~\cite{gharan2011submodular} respectively for non-monotone submodular maximization under a cardinality constraint.} 
Since the marginal contribution of an element to a set is not guaranteed to be non-negative, an algorithm's local decisions in the early stages of optimization may contribute negatively to the value of its final solution. 
At a high level, we overcome this problem with an algorithmic approach that iteratively adds to the solution blocks of elements obtained after aggressively discarding other elements. Showing the guarantees for this algorithm on non-monotone functions requires multiple subtle components. Specifically, we require that at every iteration, any element is added to the solution with low probability.  This requirement imposes a significant additional challenge to just finding a block of high contribution at every iteration, but it is needed to show that in future iterations there will exist a block with large contribution to the solution.
%
Second, we introduce a pre-processing step that discards elements with negative expected marginal contribution to a random set drawn from some distribution. This pre-processing step is needed for two different arguments: the first is that a large number of elements are discarded at every iteration, and the second is that a random block has high value when there are $k$ surviving elements.

\paragraph{Paper organization.}  Following a few preliminaries, we present the algorithm and its analysis in sections~\ref{sec:algorithm1} and~\ref{sec:algorithm2}.  We present the experiments in Section~\ref{sec:experiments}.

\paragraph{Preliminaries.}

 A function $f : 2^N \rightarrow \R_+$ is \emph{submodular} if the marginal contributions  $f_S(a) := f(S \cup a) - f(S)$ of an element $a \in N $ to a set $S \subseteq N$ are diminishing, i.e., $f_S(a) \geq f_T(a)$ for all   $a \in N \setminus T$ and  $S \subseteq T$. It is monotone if $f(S) \leq f(T)$ for all $S \subseteq T$. We assume that $f$ is non-negative, i.e., $f(S) \geq 0$ for all $S \subseteq N$, which is standard. 
 We denote  the optimal solution  by $O$, i.e. $O := \argmax_{|S| \leq k}f(S)$, and its value by $\OPT := f(O)$.  We use the following lemma from \cite{feige2011maximizing}, which is useful  for non-monotone  functions:

\begin{lemma}[\cite{feige2011maximizing}]
\label{lem:random}
Let $g : 2^N \rightarrow \R$ be a non-negative submodular function. Denote by $A(p)$ a random subset of $A$ where each element appears with probability at most $p$ (not necessarily independently). Then,  $\E\left[g(A(p))\right] \geq (1-p)  g(\emptyset) + p \cdot g(A) \geq (1 - p)  g(\emptyset)$.
\end{lemma}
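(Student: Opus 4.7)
The claim is a chain of two inequalities. The right one, $(1-p)g(\emptyset) + p\cdot g(A) \geq (1-p)g(\emptyset)$, is immediate from non-negativity of $g$: since $g(A)\geq 0$, we have $p\cdot g(A)\geq 0$. I therefore concentrate on the left inequality, $\E[g(A(p))]\geq (1-p)g(\emptyset) + p\cdot g(A)$.

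My plan is to first prove the clean special case where each element of $A$ is included in $A(p)$ independently with probability exactly $p$, and then upgrade to arbitrary joint distributions with marginals at most $p$. For the special case, let $F:[0,1]^A\to\R_+$ be the multilinear extension of $g$. Submodularity of $g$ yields $\partial^2 F/(\partial x_i\,\partial x_j)\leq 0$ for $i\ne j$, while multilinearity forces $\partial^2 F/\partial x_i^2 = 0$, so $F$ is concave along any non-negative direction. Restricting to the segment from $\mathbf{0}$ to $\mathbf{1}_A$ and evaluating at $t=p$ gives
\[
\E[g(A(p))] \;=\; F(p\mathbf{1}_A) \;\geq\; (1-p)F(\mathbf{0}) + p\,F(\mathbf{1}_A) \;=\; (1-p)g(\emptyset) + p\cdot g(A).
\]

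To handle arbitrary (possibly correlated) distributions with marginals at most $p$, I would follow the coupling argument of \cite{feige2011maximizing}: compare the given joint distribution against the independent reference above via a sequence of pairwise swaps on coordinate pairs, using submodularity and non-negativity of $g$ to verify that no swap decreases $\E[g(\cdot)]$. The main obstacle is precisely this decoupling step. A naive induction on $|A|$ is tempting but fails, because conditioning on a single coordinate can inflate the conditional marginals of the other coordinates above $p$, so the argument must operate on the full joint distribution rather than one element at a time; this is the delicate part I would import directly from \cite{feige2011maximizing} rather than rederive.
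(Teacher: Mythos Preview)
The paper does not prove this lemma; it is quoted from \cite{feige2011maximizing}, so there is no in-paper argument to compare yours against. More to the point, the left inequality as written---$\E[g(A(p))]\ge (1-p)g(\emptyset)+p\,g(A)$ when each element appears with probability \emph{at most} $p$---is stronger than what \cite{feige2011maximizing} actually proves, and it is false in that generality. Take $A=\{a\}$, $g(\emptyset)=0$, $g(\{a\})=1$, $p=\tfrac12$, and let $A(p)=\emptyset$ deterministically (the single element appears with probability $0\le p$); then $\E[g(A(p))]=0$ but $(1-p)g(\emptyset)+p\,g(A)=\tfrac12$. The lemma in \cite{feige2011maximizing} only gives $\E[g(A(p))]\ge(1-p)g(\emptyset)$ under marginals at most $p$; the extra $+p\,g(A)$ term requires marginals \emph{equal} to $p$. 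Fortunately, this paper only ever invokes the weaker bound (see the proofs of Lemmas~\ref{lem:margS} and~\ref{lem:optContrib}), so nothing downstream is affected.

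Consequently your plan has a genuine gap. The multilinear-extension argument is correct for independent sampling at probability exactly $p$, but no ``decoupling'' step can carry the $+p\,g(A)$ term through to correlated distributions with marginals merely $\le p$---the counterexample above rules that out, so the argument you hope to import from \cite{feige2011maximizing} does not exist for that claim. If you want a self-contained proof of what the paper actually needs (and of the exactly-$p$ version of the stronger inequality), a cleaner route than pairwise swaps is the Lov\'asz extension $\hat g$: convexity of $\hat g$ and Jensen give $\E[g(A(p))]\ge \hat g(\mathbf{x})$ for $\mathbf{x}=(\Pr[a\in A(p)])_{a\in A}$, and the threshold formula $\hat g(\mathbf{x})=\E_{\theta\sim U[0,1]}[g(\{a:x_a\ge\theta\})]$ then yields $\hat g(\mathbf{x})\ge(1-p)g(\emptyset)$ whenever $\mathbf{x}\le p\mathbf{1}$, with the value $(1-p)g(\emptyset)+p\,g(A)$ exactly when $\mathbf{x}=p\mathbf{1}$. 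This also sidesteps the conditioning obstacle you correctly identified.
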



\paragraph{Adaptivity.} Informally, the adaptivity of an algorithm is the number of sequential rounds it requires when polynomially-many function evaluations can be executed in parallel in each round. Formally, given a function $f$, an algorithm  is $r$-adaptive if every query $f(S)$ for the value of a set $S$  occurs at a round $i \in [r]$ such that $S$ is independent of the values $f(S')$ of all other queries  at round $i$.

\section{The \algone \ Algorithm}
\label{sec:algorithm1}

In this section, we describe the BLock ITeration Submodular maximization algorithm (henceforth \algone), which obtains an approximation arbitrarily close to  $1/2e$ in $\O(\log^2 n)$ adaptive rounds. 
\algone \ iteratively identifies a block  of at most $k/r$ elements using a \filter \ subroutine,  treated as a black-box in this section, and adds this block to the current solution $S$, for $r$ iterations. 
 
\begin{algorithm}[H]
\caption{\algone: the BLock ITeration Submodular maximization algorithm}
\begin{algorithmic}
    	\INPUT  constraint $k$, bound on number of iterations $r$
    	\STATE  $S \leftarrow \emptyset $
    	\STATE \textbf{for} $r$ \text{iterations}  $i = 1$ to $r$ \textbf{do}
	\STATE \hspace{\algorithmicindent} $S \leftarrow S \cup \filter(S, k, i, r)$
    	\RETURN $S$ 
  \end{algorithmic}
  \label{alg:1}
\end{algorithm}

The main challenge is to find in logarithmically many rounds a block of size at most $k/r$ to add to the current solution $S$. Before describing and analyzing the \filter \ subroutine, in the following lemma we reduce the problem of showing that \algone \ obtains a solution of value $\alpha v^{\star} /e$  to showing that \filter \ finds a block with marginal contribution at least $(\alpha/r) ((1 - 1/r)^{i-1}v^{\star} - f(S_{i-1}))$ to $S$ at every iteration $i$, where we wish to obtain $v^{\star}$  close to $\OPT$. The proof generalizes an argument in \cite{Buch14} and is deferred to Appendix~\ref{sec:appalgorithm1}.

\begin{restatable}{rLem}{lemmeta}
\label{lem:meta} 
For any $\alpha \in [0, 1]$,
assume that at iteration $i$ with current solution $S_{i-1}$, \filter \ returns a random set $T_i$ such that $$\E\left[f_{S_{i-1}}(T_i)\right] \geq \frac{\alpha}{r} \left(\left(1 - \frac{1}{r}\right)^{i-1}v^{\star} - f(S_{i-1})\right).$$ Then,
$$\E\left[f(S_r)\right] \geq\frac{\alpha}{e} \cdot  v^{\star}.$$
\end{restatable}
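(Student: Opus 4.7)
The plan is to reduce the claim to a one-line recurrence on $a_i := \E[f(S_i)]$ and then solve the recurrence by induction. The hypothesis of the lemma is a conditional statement given the (random) set $S_{i-1}$, so the first step is to take the total expectation of both sides. Using $f(S_i) = f(S_{i-1}) + f_{S_{i-1}}(T_i)$ and the tower property gives
\[
a_i \;\geq\; a_{i-1} + \frac{\alpha}{r}\Bigl((1-1/r)^{i-1} v^{\star} - a_{i-1}\Bigr) \;=\; \Bigl(1-\tfrac{\alpha}{r}\Bigr) a_{i-1} + \frac{\alpha}{r}(1-1/r)^{i-1} v^{\star},
\]
with $a_0 = 0$.

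The key claim I would then prove by induction on $i$ is
\[
a_i \;\geq\; i\cdot\frac{\alpha}{r}\cdot\Bigl(1-\frac{1}{r}\Bigr)^{i-1} v^{\star}.
\]
The base case $i=1$ is immediate from the recurrence. For the inductive step, plug the hypothesis into the recurrence and reduce the desired inequality to $(1-\alpha/r)(i-1) \geq (1-1/r)(i-1)$, which holds precisely because $\alpha \leq 1$. This is the one place the assumption $\alpha \in [0,1]$ is used, and it is the only mildly delicate point of the argument — I'd flag it but it is a routine check.

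Setting $i = r$ yields $a_r \geq \alpha(1-1/r)^{r-1} v^{\star}$. The final step is the standard elementary inequality $(1-1/r)^{r-1} \geq 1/e$ for every integer $r \geq 1$, which follows from the fact that this sequence decreases to $1/e$. Combining gives $\E[f(S_r)] = a_r \geq \frac{\alpha}{e}\, v^{\star}$, as required.

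The main (minor) obstacle is bookkeeping the randomness: the given bound on $\E[f_{S_{i-1}}(T_i)]$ is stated as if $S_{i-1}$ were fixed, so I need to be explicit that it is a conditional expectation and then take outer expectations over $S_{i-1}$ before invoking the recurrence. Everything else is arithmetic.
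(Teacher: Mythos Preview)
Your proposal is correct and essentially identical to the paper's proof: both set up the recurrence $\E[f(S_i)] \geq (1-\alpha/r)\E[f(S_{i-1})] + \frac{\alpha}{r}(1-1/r)^{i-1}v^\star$, prove by induction that $\E[f(S_i)] \geq \frac{i\alpha}{r}(1-1/r)^{i-1}v^\star$ using $\alpha \leq 1$ at the same step, and finish with $(1-1/r)^{r-1} \geq 1/e$. Your explicit remark about passing from the conditional bound to total expectations via the tower property is a clarification the paper leaves implicit, but otherwise the arguments match line by line.
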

The advantage of \algone \ is that it terminates after $\O(d  \cdot \log n)$ adaptive rounds when using $r = \O(\log n)$ and a \filter \  subroutine that is $d$-adaptive.  In the next section we describe \filter \ and prove that it respects the conditions of Lemma~\ref{lem:meta} in $d = \O(\log n)$ rounds. 

\section{The \filter \ Subroutine}
\label{sec:algorithm2}
 In this section, we describe and analyze the  \filter \ subroutine. We show that for any constant $\epsilon > 0$,  this algorithm finds in $\O(\log n)$ rounds a block of at most $k/r$ elements with marginal contribution to $S$ that is at least $t/r$, with $t := ((1-\epsilon/2)/2)((1 - 1/r)^{i-1}(1 - \epsilon/2)\OPT - f(S_{i-1}))$, when called at iteration $i$ of \algone. By Lemma~\ref{lem:meta}  with $\alpha = (1-\epsilon)/2$ and $v^{\star} = (1 - \epsilon/2)\OPT$, this implies that \algone \ obtains an approximation arbitrarily close to $1/2e$ in $\O(\log^2 n)$ rounds. 
 
 The \filter \ algorithm, described formally below, iteratively discards elements from a set  $X$ initialized to the ground set $N$. We  denote by $\U(X)$ the uniform distribution over all subsets of $X$ of size exactly $k/r$ and by $\Delta(a, S, X)$  the expected marginal contribution of an element $a$ to a union of the current solution $S$ and a random set $R \sim \U(X)$, i.e. 
$$\Delta(a, S, X) := \E_{R \sim \U(X)} \left[f_{S \cup (R\setminus a)}(a) \right].$$
 %

At every iteration, \filter \ first pre-processes surviving elements $X$ to obtain $X^+$, which is the set of elements $a \in X$ with non-negative marginal contribution $\Delta(a, S, X)$. After this pre-processing step, \filter \  evaluates the marginal contribution $\E_{R \sim \U(X)}\left[f_S(R \cap X^+)\right]$ of a random set $R \sim \U(X)$ without its elements  not in $X^+$ (i.e. $R$ excluding its elements with negative expected  marginal contribution). If the marginal contribution of $R \cap X^+$ is at least $t/r$, then $R \cap X^+$ is returned. Otherwise, the algorithm discards from $X$
 the elements $a$ with expected marginal contribution $\Delta(a, S, X)$ less than $(1+\epsilon/2)t /k$.
 The algorithm iterates until either $\E[f_S(R \cap X^+)] \geq t/r$ or there are less than $k$ surviving elements, in which case \filter \  returns a random set $R \cap X^+$ with $R \sim \U(X)$ and with dummy elements  added to $X$ so that $|X| = k$. A dummy element $a$ is an element with $f_S(a) = 0$ for all $S$.  


 \begin{algorithm}[H]
\caption{\filter$(S, k, i, r)$}
\begin{algorithmic}
    	\INPUT current solution $S$ at outer-iteration $ i \leq r$
    	\STATE $X \leftarrow N, t \leftarrow \frac{1-\epsilon/2}{2} (\left(1 - 1/r\right)^{i-1}(1 - \epsilon/2)\OPT - f(S))$
   	\STATE  \textbf{while}   $|X| > k$  \textbf{do} 
   	\STATE \ \ \ \hspace{\algorithmicindent} $X^+ \leftarrow \left\{a \in X \ : \ \Delta(a, S, X) \geq 0 \right\}$
   	\STATE \ \ \ \hspace{\algorithmicindent}  \textbf{if} $\E_{R \sim \U(X)}\left[ f_S(R 
   	\cap X^+)\right]  \geq  t/r$  \textbf{return} $R \cap X^+$, where $R \sim \U(X)$ 
   			\STATE  \ \ \ \hspace{\algorithmicindent}  $X \leftarrow  \left\{a \in X \ : \ \Delta(a, S, X) \geq   \left(1 + \epsilon/4 \right) t /k\right\} $	
    		\STATE $X \leftarrow X \cup \{k - |X| \text{ dummy elements}\}$
    		\STATE  $X^+ \leftarrow \left\{a \in X \ : \ \Delta(a, S, X) \geq 0 \right\}$
   			\STATE  \textbf{return} $R \cap X^+$, where $R \sim \U(X)$ 
  \end{algorithmic}
  \label{alg:filter}
\end{algorithm}

The above description is an idealized version of the algorithm. In practice, we do not know $\OPT$ and we cannot compute expectations exactly. Fortunately, we can apply multiple guesses for  $\OPT$ non-adaptively  and obtain arbitrarily good estimates of the expectations in one  round by sampling. The sampling process for the estimates first samples $m$ sets from $\U(X)$, then queries the desired sets to obtain a random realization of $f_S(R \cap X^+)$ and $f_{S \cup (R\setminus a)}(a)$, and finally averages the $m$ random realizations  of these values. By standard concentration bounds, $m = \O((\OPT/\epsilon)^2 \log(1/\delta))$ samples are sufficient to obtain with probability $1 - \delta$ an estimate with an $\epsilon$ error.  For ease of presentation and notation, we analyze the idealized version of the algorithm, which easily extends to the algorithm with estimates and guesses as in \cite{BS18, BS18b, BRS18}.

\subsection{The approximation}
Our goal is to show that \filter \ returns a random block whose expected marginal contribution to $S$ is at least  $t/r$. 
By Lemma~\ref{lem:meta} this implies  \algone \ obtains a $(1-\epsilon)/2e$-approximation. 

\begin{lemma}
\label{lem:main} Assume  $r \geq 20 \rho \epsilon^{-1}$ and that after at most $\rho - 1$ iterations of \filter,
 \filter \ returns a set $R$ at iteration $i$ of \algone, then
$$\E[f_S(R)] \geq  \frac{t}{r} =  \frac{1-\epsilon/2}{2r}\left(\left(1 - \frac{1}{r}\right)^{i-1} (1-\epsilon/2)\OPT - f(S)\right).$$
\end{lemma}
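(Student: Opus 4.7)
The plan is to case-split on how \filter\ terminates. If the \textbf{if}-guard inside the while loop fires (\emph{Case 1}), the returned set $R \cap X^+$ satisfies $\E_{R \sim \U(X)}[f_S(R \cap X^+)] \geq t/r$ by construction of the guard; this is exactly the claim, and the hypothesis $r \geq 20\rho\epsilon^{-1}$ is not needed. All the real work is in the other case.

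\emph{Case 2: the while loop exits with $|X| \leq k$.} The algorithm then appends dummies so $|X|=k$ and returns $R \cap X^+$ for $R \sim \U(X)$. Since $|R|=k/r$ and $|X|=k$, each element of $X^+$ appears in $R \cap X^+$ with probability exactly $1/r$. Applying Lemma~\ref{lem:random} to the non-negative submodular function $B \mapsto f(S \cup B)$ restricted to $B \subseteq X^+$, with $p = 1/r$, yields
$$\E[f_S(R \cap X^+)] \;\geq\; \tfrac{1}{r}\bigl(f(S \cup X^+) - f(S)\bigr) \;=\; \tfrac{1}{r}\,f_S(X^+),$$
so the problem reduces to proving the deterministic bound $f_S(X^+) \geq t$.

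To establish $f_S(X^+) \geq t$, I would exploit the two structural properties of \filter. First, the pre-processing step guarantees every $a \in X^+$ has $\Delta(a, S, X) \geq 0$, so no element of $X^+$ is harmful in expectation. Second, every real (non-dummy) element of $X$ survived each of the $\leq \rho - 1$ filtering rounds, and so at every round $j$ at which it was kept, $\Delta(a, S, X_j) \geq (1+\epsilon/4)\,t/k$. A submodularity argument transfers these bounds --- which held for the old, larger $X_j$'s --- into lower bounds on $\Delta(a, S, X)$ for the final $X$, and these per-element bounds are then summed (using the submodular inequality $f_S(A) \geq \sum_{a \in A} f_{S \cup A \setminus a}(a)$ applied to $A = X^+$) into the desired lower bound $f_S(X^+) \geq t$.

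The main obstacle is controlling the discrepancy between $\Delta(a, S, X_j)$ at an old iteration and $\Delta(a, S, X)$ at the final one: the support of $\U(\cdot)$ changes across rounds (with dummies appearing in the final step), which by submodularity can only decrease per-element marginal contributions. This is exactly where the hypothesis $r \geq 20\rho\epsilon^{-1}$ is used: because $r$ is much larger than the number of filtering rounds $\rho$, the per-round drop in $\Delta$ due to the distributional shift is at most roughly $\epsilon\,t/(\rho k)$, so the total loss across all $\leq \rho - 1$ rounds is at most $(\epsilon/4)\,t/k$. This is strictly smaller than the filter threshold $(1+\epsilon/4)\,t/k$, leaving each surviving element contributing at least $\sim t/k$ at the final step, and summing these contributions yields $f_S(X^+) \geq t$.
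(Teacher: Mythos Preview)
Your Case 1 is fine and matches the paper. In Case 2, the reduction $\E_R[f_S(R\cap X^+)] \geq f_S(X^+)/r$ via Lemma~\ref{lem:random} is also valid (it is the $T = X^+$ instance of the paper's Lemma~\ref{lem:randomtwo}). The gap is the claim $f_S(X^+) \geq t$: for non-monotone $f$ this need not hold, and your argument for it does not go through.

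The summing step fails because the quantities do not match. The inequality $f_S(X^+) \geq \sum_{a\in X^+} f_{S\cup(X^+\setminus a)}(a)$ needs lower bounds on $f_{S\cup(X^+\setminus a)}(a)$, the marginal of $a$ to the \emph{large} context $X^+\setminus a$ of size up to $k-1$. What filtering gives you is only that $\Delta(a,S,X_j)=\E_R[f_{S\cup(R\setminus a)}(a)]$ is large, where $|R\setminus a|\leq k/r-1$ is \emph{small}. Submodularity says marginals to smaller contexts are larger, so a lower bound on $\Delta$ yields no lower bound on $f_{S\cup(X^+\setminus a)}(a)$; the direction you need is the opposite one. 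On top of this, the number of non-dummy survivors can be any $m\leq k$, so even a per-element contribution of $t/k$ would only give $m\,t/k$, not $t$; and the asserted bound on the per-round drift of $\Delta$ has no justification, since by Lemma~\ref{lem:pruning} the support $X$ shrinks by a constant factor each round, making $\U(X_j)$ and $\U(X_{j+1})$ far apart in a way no simple submodularity argument controls. The paper takes a different route that avoids bounding $f_S(X^+)$ altogether: it exhibits a subset $T\subseteq O$ of the \emph{optimal solution} that survives every filter (Lemma~\ref{lem:Tstar}), with the hypothesis $r\geq 20\rho\epsilon^{-1}$ used inside Lemma~\ref{lem:optContrib} to ensure $\Pr[a\in\bigcup_j R_j]\leq\epsilon/20$ so that, via Lemma~\ref{lem:random}, $\E[f_{S\cup(\cup_j R_j)}(O)]$ stays close to $(1-1/r)^{i-1}\OPT - f(S)$ and hence each $o_\ell\in T$ beats the filter threshold. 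Then Lemma~\ref{lem:randomtwo} is applied with this $T$, not with $X^+$. Your sketch never invokes $O$, and the role you assign to $r\geq 20\rho\epsilon^{-1}$ is not how the hypothesis is actually used.
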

The remainder of the analysis of the approximation is devoted to the proof of Lemma~\ref{lem:main}. First note that if \filter \ returns $R \cap X^+$, then the desired bound  on $\E[f_S(R)]$ follows from the condition to return that block. Otherwise \filter \ returns $R$ due to $|X| \leq k$, and then the proof consists of two parts.  First, in Section \ref{sec:part1} we argue that when \filter \ terminates,  there \emph{exists} a subset $T$ of $X$ for which $f_S(T) \geq t$. Then, in Section~\ref{sec:part2} we prove that such a subset $T$ of $X$ for which $f_S(T) \geq t$ not only exists, but is also returned by \filter .  We do this by proving a new general lemma for non-monotone submodular functions that may be of independent interest.  This lemma shows that a random subset of $X$ of size $s$ well approximates the optimal subset of size $s$ in $X$.  
 
\subsubsection{Existence of a surviving block with high contribution to $S$}
\label{sec:part1}

The main result in this section is Lemma~\ref{lem:Tstar}, which shows that when \filter \ terminates there \emph{exists} a subset $T$ of $X$ s.t $f_S(T) \geq t$. 
To prove this, we first prove Lemma~\ref{lem:margS}, which argues that $f(O \cup S) \geq (1 - 1/r)^{i-1} \OPT $. This bound explains the $(1 - 1/r)^{i-1}(1 - \epsilon/2)\OPT - f(S_{i-1}))$ term in $t$. For monotone functions, this is trivial since $f(O \cup S)  \geq f(O) = \OPT$ by definition of monotonicity.
%
For non-monotone functions, this inequality does not hold. Instead, the approach used to bound  $f(O \cup S)$ is to argue that any element $a \in N$ is added to $S$ by \filter \ with probability at most $1/r$ at every iteration. The key to that argument is that in both cases where \filter \ terminates we have $|X| \geq k$ (with $X$ possibly containing dummy elements), which implies that every element $a$ is in $R \sim \U(X)$ with probability at most $1/r$.

\begin{lemma}
\label{lem:margS}
Let $S$ be the set obtained after $i-1$ iterations of \algone  \ calling the  \filter \ subroutine, then  $$\E[f(O \cup S)] \geq (1 - 1/r)^{i-1} \OPT.$$ 
\end{lemma}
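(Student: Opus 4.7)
The plan is to reduce the lemma to an application of Lemma~\ref{lem:random} on the auxiliary function $g : 2^{N \setminus O} \to \R_+$ defined by $g(T) := f(O \cup T)$, which inherits non-negativity and submodularity from $f$.  Applied with $A := N \setminus O$ and $p := 1 - (1 - 1/r)^{i-1}$, the lemma would yield $\E[g(S \cap A)] \geq (1-p)\, g(\emptyset) = (1-1/r)^{i-1} \OPT$, and since $g(S \cap A) = f(O \cup (S \setminus O)) = f(O \cup S)$, this is exactly the claim.

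To invoke Lemma~\ref{lem:random}, I would verify that every element $a \in N \setminus O$ lies in $S = S_{i-1}$ with marginal probability at most $p$.  The natural route is to prove a stronger per-iteration statement: at each of the prior $i-1$ outer iterations $j$ of \algone, and for \emph{any} element $a \in N$, the conditional probability that $a$ is added to $S$ at iteration $j$ given the history through iteration $j-1$ is at most $1/r$.  Chaining this multiplicatively over the $i-1$ iterations then gives $\Pr[a \notin S_{i-1}] \geq (1-1/r)^{i-1}$ and hence $\Pr[a \in S_{i-1}] \leq p$.

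The per-iteration bound is where the structural feature of \filter \ flagged in the preamble does the work: whenever \filter \ returns, it returns a block of the form $R \cap X^+$ with $R \sim \U(X)$ and $|X| \geq k$.  This is clear in both termination branches --- in the early return inside the \textbf{while} loop, the loop guard $|X| > k$ still holds; in the late return, dummy elements are explicitly padded in so that $|X| = k$.  Because $R$ is sampled uniformly from subsets of $X$ of size $k/r$, each element of $X$ is in $R$ with probability $(k/r)/|X| \leq 1/r$, while elements outside $X$ appear with probability $0$; intersecting with $X^+$ can only lower these probabilities.

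The main subtlety will be that the set $X$ at termination is itself random, depending on the realized trajectory of \filter's inner while loop.  This is handled by noting that the bound $\Pr[a \in R \cap X^+ \mid X] \leq 1/r$ holds \emph{for every} realization of $X$ with $|X| \geq k$, so averaging over the law of $X$ preserves the $1/r$ upper bound unconditionally.  With this clean per-iteration statement, the multiplicative chaining across the $i-1$ iterations and the final invocation of Lemma~\ref{lem:random} are routine.
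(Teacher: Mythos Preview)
Your proposal is correct and follows essentially the same route as the paper: define $g(T)=f(O\cup T)$, argue that each element lands in $S_{i-1}$ with probability at most $1-(1-1/r)^{i-1}$ because \filter\ always returns $R\cap X^+$ with $R\sim\U(X)$ and $|X|\ge k$, and then invoke Lemma~\ref{lem:random}. Your write-up is a bit more explicit than the paper's (restricting $g$ to $2^{N\setminus O}$, conditioning on the realized $X$, and spelling out the multiplicative chaining across iterations), but these are elaborations of the same argument rather than a different approach.
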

\begin{proof} 
In both cases where \filter  \ terminates, $|X| \geq k$. Thus $\Pr[a \in R \sim \U(X)] = k/(r|X|) < 1/r$. This implies that at iteration $i$ of \algone, $\Pr[a \in S] \leq 1 - (1 - 1/r)^{i-1}$. Next, we define $g(T) := f(O \cup T)$, which is also submodular. By Lemma~\ref{lem:random} from the preliminaries, we get 
\begin{equation*}
\E[f(S \cup O)] = \E[g(S)] \geq (1 - 1/r)^{i-1}g(\emptyset) =  (1 - 1/r)^{i-1} \OPT . \qedhere  
\end{equation*}
%
\end{proof}

Let $\rho$, $X_j$, and $R_j$ denote the number of iterations of $\filter(S, k, i ,r)$, the set $X$ at iteration $j \leq \rho$ of \filter, and the set $R \sim \U(X_j)$ respectively. We show that the expected marginal contribution of $O$ to $S \cup \left(\cup_{j=1}^\rho R_j \right)$ approximates $(1 - 1/r)^{i-1} \OPT - f(S)$ well. This crucial fact allows us to argue about the value of optimal elements that survive iterations of \filter.

 \begin{restatable}{rLem}{optContrib} 
\label{lem:optContrib}
For all  $r, \rho, \epsilon > 0$ s.t. $r \geq 20 \rho \epsilon^{-1}$, if \filter$(S, k, i, r)$ has not terminated after $\rho$ iterations, then
$$\E_{R_1, \ldots, R_\rho}\left[f_{S \cup \left(\cup_{j=1}^\rho R_j \right)}\left(O\right)\right] \geq \left(1 - \epsilon/10\right)\left((1 - 1/r)^{i-1} (1 - \epsilon/2)\OPT - f(S)\right).$$
\end{restatable}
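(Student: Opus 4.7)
The plan is to write $f_{S \cup \bigcup_j R_j}(O) = f(O \cup T) - f(T)$ with $T := S \cup \bigcup_{j=1}^\rho R_j$, then lower-bound $\E[f(O \cup T)]$ and upper-bound $\E[f(T)]$ and combine. The first bound is a direct generalization of Lemma~\ref{lem:margS}; the second uses that the return condition of \filter\ has failed for $\rho$ rounds, together with the pre-processing step.

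For the lower bound on $\E[f(O \cup T)]$, I will apply Lemma~\ref{lem:random} to the non-negative submodular function $g(A) := f(O \cup A)$ evaluated at $T$. The key probability estimate is that each $a \in N$ lies in $T$ with probability at most $1 - (1-1/r)^{i-1+\rho}$: the proof of Lemma~\ref{lem:margS} already gives $\Pr[a \in S] \leq 1 - (1-1/r)^{i-1}$, and conditional on $S$ each $R_j \sim \U(X_j)$ includes $a$ with probability at most $1/r$ because $|X_j| \geq k$ (since \filter\ has not terminated). These $R_j$ are conditionally independent given $S$ because $X_j$ is a deterministic function of $S$ — both the return check and the filtering step depend on $X$ only through the deterministic expectation $\Delta$. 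Lemma~\ref{lem:random} then yields $\E[f(O \cup T)] \geq (1-1/r)^{i-1+\rho}\OPT \geq (1-\epsilon/20)(1-1/r)^{i-1}\OPT$, using $(1-1/r)^\rho \geq 1 - \rho/r \geq 1 - \epsilon/20$ from $r \geq 20\rho/\epsilon$.

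For the upper bound on $\E[f(T)]$, write $R_j^+ := R_j \cap X_j^+$ and $R_j^- := R_j \cap (X_j \setminus X_j^+)$. Subadditivity of $f_S$ (from submodularity) and the decomposition $f_S(R_j) \leq f_S(R_j^+) + f_S(R_j^-)$ give
$$\E[f(T)] \leq f(S) + \sum_{j=1}^\rho \E[f_S(R_j^+)] + \sum_{j=1}^\rho \E[f_S(R_j^-)] < f(S) + \rho\, t/r + \sum_{j=1}^\rho \E[f_S(R_j^-)],$$
where the last step uses the failed return condition $\E[f_S(R_j^+)] < t/r$. The main obstacle is bounding $\sum_j \E[f_S(R_j^-)]$: this is precisely where the pre-processing step, which removes all $a$ with $\Delta(a,S,X_j) = \E_{R \sim \U(X_j)}[f_{S \cup (R \setminus a)}(a)] < 0$, becomes essential. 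The plan is to unpack $\Delta < 0$ conditionally on $\{a \in R\}$ versus $\{a \notin R\}$, use a coupling that removes a uniform element from a size-$(k/r)$ subset to obtain via submodularity the inequality $\E[f_{S \cup (R \setminus a)}(a) \mid a \in R] \geq \E[f_{S \cup R}(a) \mid a \notin R]$, and from this conclude $\E[f_S(R_j^-)] \leq 0$ (or at worst an $\O(\epsilon\, v/\rho)$ error with $v := (1-1/r)^{i-1}(1-\epsilon/2)\OPT - f(S)$).

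Substituting the two bounds into $\E[f_T(O)] = \E[f(O \cup T)] - \E[f(T)]$ and using $\rho/r \leq \epsilon/20$ together with $t = \tfrac{1-\epsilon/2}{2}\, v \leq v/2$ gives
$$\E[f_T(O)] \geq (1-\epsilon/20)(1-1/r)^{i-1}\OPT - f(S) - \rho\, t/r - \O(\epsilon v).$$
A short algebraic check, using the slack $(1-\epsilon/20) - (1-\epsilon/10)(1-\epsilon/2) = \Theta(\epsilon)$ in the coefficient of $(1-1/r)^{i-1}\OPT$ (which is at least $\OPT/e$) to absorb the $\O(\epsilon v)$ error, recovers the target bound $\E[f_T(O)] \geq (1-\epsilon/10)\, v$, as required.
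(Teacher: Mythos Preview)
Your overall structure is the same as the paper's: write $f_{S \cup (\cup_j R_j)}(O) = f_S\bigl(O \cup (\cup_j R_j)\bigr) - f_S\bigl(\cup_j R_j\bigr)$, lower-bound the first term via Lemma~\ref{lem:random}, upper-bound the second via subadditivity together with the failed return condition, and combine. One cosmetic difference: the paper keeps $S$ fixed and applies Lemma~\ref{lem:random} to $g(T) = f(O \cup S \cup T)$ over the random set $\cup_j R_j$ (using only $\Pr[a \in \cup_j R_j] \leq \rho/r \leq \epsilon/20$), and invokes Lemma~\ref{lem:margS} afterward; you instead fold $S$ into the random set and apply Lemma~\ref{lem:random} once. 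Both routes reach the same inequality, and both implicitly take an expectation over $S$ that the lemma's statement does not display.

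Where you diverge from the paper is in the handling of $R_j$ versus $R_j \cap X_j^+$. The paper simply asserts $\E_{R_j}[f_S(R_j)] < t/r$ ``by the algorithm'' and moves on; it does not decompose into $R_j^+$ and $R_j^-$ at all. You correctly observe that the algorithm's test is on $\E[f_S(R_j \cap X_j^+)]$, and you try to bridge the gap by showing $\E[f_S(R_j^-)] \leq 0$ via the coupling inequality $\E[f_{S \cup (R\setminus a)}(a)\mid a\in R] \geq \E[f_{S\cup R}(a)\mid a\notin R]$. That coupling is correct, but it does not yield what you need. Combined with $\Delta(a)<0$ it only forces the \emph{smaller} conditional expectation to be negative; the summands that make up $f_S(R_j^-)$ are of the form $f_{S\cup A}(a)$ with $A$ a proper subset of $R_j \setminus a$, and submodularity makes each such term at least $f_{S\cup(R_j\setminus a)}(a)$ --- the wrong direction for an upper bound. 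So neither $\E[f_S(R_j^-)]\le 0$ nor the fallback $\O(\epsilon v/\rho)$ bound follows from your sketch. In short, you have put your finger on a genuine subtlety that the paper's own proof elides, but the patch you propose does not close it.
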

\begin{proof} 
We exploit the fact that if \filter$(S, k, i, r)$  has not terminated after $\rho$ iterations, then by the algorithm, the random set $R_j \sim \U(X)$ at iteration $j$ of \filter \ has expected value that is upper bounded as follows:
$$\EU[R_j]\left[f_{S }\left(  R_j \right)\right] < \frac{1-\epsilon/2}{2r}  \left(\left(1 - \frac{1}{r}\right)^{i-1}(1 - \epsilon/2)\OPT - f(S)\right)$$
for all $j \leq \rho$. Next, by subadditivity, we have  
$\E_{R_1, \ldots, R_\rho}\left[f_{S }\left( \left(\cup_{j=1}^\rho R_j \right)\right)\right]  \leq \sum_{j=1}^\rho \E_{R_j}\left[f_{S }\left(  R_j \right)\right]$.

Note that $$\Pr_{R_j}\left[a \in R_j\right] \leq \frac{k/r}{|X_j|} \leq \frac{1}{r}$$
since $|X| > k$ during \filter.
Thus, by a union bound, $$ \Pr_{R_1, \ldots, R_\rho}\left[a \in \cup_{j=1}^\rho R_j\right]  \leq \frac{\rho}{r} \leq \frac{\epsilon}{20}.$$
Next, define $$g(T) = f(O \cup S \cup T)$$ which is non-negative submodular.
Thus,  we have
\begin{align*}
\EU[R_1, \ldots, R_\rho]\left[f_{S }\left(O\cup \left(\cup_{j=1}^\rho R_j \right)\right)\right] & = \EU[R_1, \ldots, R_\rho]\left[g\left(\cup_{j=1}^\rho R_j \right)\right] - f(S) \\
& \geq \left(1 - \frac{\epsilon}{20}\right)g(\emptyset) - f(S) & \text{Lemma~\ref{lem:random}}\\ 
& = \left(1 - \frac{\epsilon}{20}\right)f(O \cup S) - f(S) \\
& \geq \left(1 - \frac{\epsilon}{20}\right) (1 - 1/r)^{i-1} \OPT - f(S) &\text{Lemma~\ref{lem:margS}}
\end{align*}

 Combining the above inequalities, we conclude that
\begin{align*}
&  \EU[R_1, \ldots, R_\rho]\left[f_{S \cup \left(\cup_{j=1}^\rho R_j \right)}\left(O\right)\right] \\
 = & \EU[R_1, \ldots, R_\rho]\left[f_{S }\left(O\cup \left(\cup_{j=1}^\rho R_j \right)\right)\right] -  \EU[R_1, \ldots, R_\rho]\left[f_{S }\left( \left(\cup_{j=1}^\rho R_j \right)\right)\right] \\
  \geq  &  \left( \left(1 - \frac{\epsilon}{20}\right) (1 - 1/r)^{i-1} \OPT - f(S)\right) - \sum_{j=1}^\rho \EU[R_j]\left[f_{S }\left(R_j \right)\right]\\
   \geq  & \left( \left(1 - \frac{\epsilon}{20}\right) (1 - 1/r)^{i-1} \OPT - f(S)\right)- \frac{\rho(1-\epsilon/2)}{2r}  \left(\left(1 - \frac{1}{r}\right)^{i-1} (1 - \epsilon/2)\OPT - f(S)\right) \\ 
    \geq   & \left(1 - \frac{\epsilon}{10}\right)\left((1 - 1/r)^{i-1} (1 - \epsilon/2)\OPT - f(S)\right)    \qedhere
\end{align*} 
\end{proof}

We are now ready to show that when \filter \ terminates after $\rho$ iterations, there exists a subset $T$ of $X_\rho$ s.t $f_S(T) \geq t$. At a high level, the proof defines $T$ to be a set of meaningful optimal elements, then uses Lemma~\ref{lem:optContrib} to show that these elements survive $\rho$  iterations of \filter \ and respect $f_S(T) \geq t$.

\begin{restatable}{rLem}{lemTstar}
\label{lem:Tstar} 
For all $r, \rho, \epsilon > 0$, if $r \geq 20 \rho \epsilon^{-1}$, then there exists  $T \subseteq X_\rho$, that survives $\rho$ iterations of \filter$(S, k, i, r)$ and that satisfies 
$$f_S(T) \geq  \frac{1 - \epsilon/10}{2} \left((1 - 1/r)^{i-1} (1- \epsilon/2)\OPT - f(S)\right).$$
\end{restatable}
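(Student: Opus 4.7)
The plan is to take $T := (O \cap X_\rho) \setminus S$---the optimal elements that survive all $\rho$ filter iterations, minus any already lying in $S$ (which contribute nothing to $f_S$)---and argue via Lemma~\ref{lem:optContrib} together with an upper bound on the contribution of the discarded optimal elements $D := O \setminus X_\rho$ that $f_S(T)$ satisfies the desired bound. Throughout write $U_\rho := S \cup \bigcup_{j=1}^{\rho} R_j$, $\bar{R} := \bigcup_{j=1}^\rho R_j$, and $V := (1-1/r)^{i-1}(1-\epsilon/2)\OPT - f(S)$. By construction $T \subseteq X_\rho$, so only the value bound requires work.

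Each $a \in D$ was discarded at some iteration $j_a \leq \rho$ because $\Delta(a, S, X_{j_a}) < (1+\epsilon/4)\, t/k$. A standard submodularity argument using $U_\rho \supseteq S \cup (R_{j_a} \setminus a)$ gives $\E[f_{U_\rho}(a)] \leq \Delta(a, S, X_{j_a})$, and then subadditivity of marginals together with $|D| \leq k$ yields
$$\E[f_{U_\rho}(D)] \;\leq\; \sum_{a \in D} \E[f_{U_\rho}(a)] \;\leq\; (1+\epsilon/4)\, t \;=\; \tfrac{(1+\epsilon/4)(1-\epsilon/2)}{2}\, V.$$
Combining with Lemma~\ref{lem:optContrib} via the subadditive decomposition $f_{U_\rho}(O) \leq f_{U_\rho}(T) + f_{U_\rho}(D)$ gives
$$\E[f_{U_\rho}(T)] \;\geq\; (1-\epsilon/10)\, V \;-\; (1+\epsilon/4)\, t \;\geq\; \tfrac{1+\epsilon/20}{2}\, V,$$
which comfortably exceeds the target $\tfrac{1-\epsilon/10}{2}\, V$.

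It remains to transfer from $\E[f_{U_\rho}(T)]$ back to $f_S(T)$. For monotone $f$ this would be immediate since $S \subseteq U_\rho$, but for non-monotone submodular $f$ enlarging the base set can \emph{increase} a marginal. To handle this I apply the general submodular inequality $f(X)+f(Y) \geq f(X \cup Y)+f(X \cap Y)$ with $X = S \cup T$, $Y = U_\rho$, which yields the pointwise bound $f_S(T) - f_{U_\rho}(T) \geq f_S(T \cap U_\rho)$. Since $T$ is disjoint from $S$ by construction, $T \cap U_\rho = T \cap \bar{R}$ is a random subset of $T$ in which each element appears with probability at most $\rho/r \leq \epsilon/20$, so Lemma~\ref{lem:random} applied to the non-negative submodular $g(Z) := f(S \cup Z)$ gives $\E[f_S(T \cap U_\rho)] \geq (\rho/r)\, f_S(T)$. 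Rearranging produces $(1-\rho/r)\, f_S(T) \geq \E[f_{U_\rho}(T)]$, and hence $f_S(T) \geq \tfrac{1-\epsilon/10}{2}\, V$.

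The main obstacle will be this last transfer step: unlike in the monotone case, $f_S(T) \geq f_{U_\rho}(T)$ fails pointwise, so one genuinely needs the small sampling probability $\rho/r \leq \epsilon/20$ (provided by the hypothesis $r \geq 20\rho/\epsilon$) to absorb the loss via Lemma~\ref{lem:random}---exactly the same mechanism that powers Lemma~\ref{lem:optContrib}, but invoked in reverse. A secondary subtlety is the bound $\E[f_{U_\rho}(a)] \leq \Delta(a, S, X_{j_a})$: the pointwise submodular inequality $f_{U_\rho}(a) \leq f_{S \cup (R_{j_a} \setminus a)}(a)$ only holds when $a \notin U_\rho$, and the contribution on $\{a \in \bar{R}\}$ has to be controlled using the same $\epsilon/20$-probability bound.
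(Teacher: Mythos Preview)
Your argument breaks in the transfer step (``step 3''). The inequality you invoke, $\E[f_S(T\cap U_\rho)]\ge (\rho/r)\,f_S(T)$, would follow from the \emph{strong} form of Lemma~\ref{lem:random}, namely $\E[g(A(p))]\ge (1-p)g(\emptyset)+p\,g(A)$. That form, however, requires each element to be included \emph{independently with probability exactly} $p$. Here $T\cap\bar R$ has neither property: within each $R_j$ the inclusions are negatively correlated (uniform fixed-size subset), and the marginal probability is merely $\le \rho/r$, not equal to it. With ``at most $p$'' the strong form is false already for $T=\{a\}$, $g(\emptyset)=0$, $g(a)=1$, and inclusion probability $q<p$: then $\E[g]=q<(1-p)\cdot 0+p\cdot 1=p$. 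So the claimed lower bound on $\E[f_S(T\cap U_\rho)]$ is unjustified.

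If you instead use the version of Lemma~\ref{lem:random} that \emph{does} apply here, you only get $\E[f_S(T\cap \bar R)]\ge -(\rho/r)\,f(S)\ge -(\epsilon/20)\,f(S)$, and hence
\[
f_S(T)\;\ge\;\E[f_{U_\rho}(T)]-\tfrac{\epsilon}{20}f(S)\;\ge\;\tfrac{1+\epsilon/20}{2}\,V-\tfrac{\epsilon}{20}f(S).
\]
This is \emph{not} $\ge \tfrac{1-\epsilon/10}{2}V$ unless $f(S)\le \tfrac{3}{2}V$, i.e.\ $f(S)\le \tfrac{3}{5}(1-1/r)^{i-1}(1-\epsilon/2)\OPT$, which nothing in the setup guarantees. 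So the approach as written does not close.

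The paper avoids this global transfer altogether. It orders $O=\{o_1,\dots,o_k\}$, sets $\Delta_\ell:=\E\big[f_{S\cup O_{\ell-1}\cup(\bar R\setminus o_\ell)}(o_\ell)\big]$, and takes $T=\{o_\ell:\Delta_\ell\ge \Delta/2\}$ where $\Delta=\tfrac{1}{k}\E[f_{U_\rho}(O)]$. The point is that submodularity gives the \emph{per-element} inequality $f_{S\cup O_{\ell-1}}(o_\ell)\ge \Delta_\ell$ directly (smaller base set), so telescoping yields $f_S(T)\ge \sum_{o_\ell\in T}\Delta_\ell\ge k\Delta/2$, with no need to compare $f_S(\cdot)$ against $f_{U_\rho}(\cdot)$ on a whole block. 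The thresholding also certifies survival: $\Delta_\ell\ge \Delta/2$ dominates $\Delta(a,S,X_j)$ by submodularity, hence exceeds the discard threshold. Your choice $T=(O\cap X_\rho)\setminus S$ loses exactly this per-element handle, forcing the problematic global comparison.
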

\begin{proof}
At a high level, the proof first defines a subset $T$ of the optimal solution $O$. Then, the remainder of the proof consists of two main parts. First, we show that elements in $T$ survive $\rho$ iterations of \filter$(S, k, i, r)$. Then, we show that $f_S(T) \geq \frac{1}{2}\left(1 - \frac{\epsilon}{10}\right)\left((1 - 1/r)^{i-1} (1 - \epsilon/2)\OPT - f(S)\right) .$ We introduce some notation. Let $O = \{o_1, \ldots, o_k\}$ be the optimal elements in some arbitrary order and $O_\ell = \{o_1, \ldots, o_\ell\}$. We define the following marginal contribution $\Delta_\ell$ of  optimal element $o_\ell$:
$$\Delta_\ell := \EU[R_1, \ldots, R_\rho] \left[ f_{S \cup O_{\ell-1}\cup \left(\cup_{j=1}^\rho R_j \setminus \{o_\ell\}\right) }(o_\ell)\right].$$
We define $T$ to be the set of optimal elements $o_\ell$ such that $\Delta_\ell \geq \frac{1}{2} \Delta$ where
$$\Delta := \frac{1}{k}\cdot \EU[R_1, \ldots, R_\rho]\left[f_{S \cup \left(\cup_{j=1}^\rho R_j \right)}\left(O\right)\right].$$
 We first argue that elements in $T$ survive $\rho$ iterations of \filter$(S, k, i, r)$. For element $o_\ell \in T$, we have
$$\Delta_\ell  \geq   \frac{1}{2} \Delta  \geq \frac{1}{2k}  \cdot \EU[R_1, \ldots, R_\rho]\left[f_{S \cup \left(\cup_{j=1}^\rho R_j \right)}\left(O\right)\right]
\geq  \frac{1}{2k}  \left(1 - \frac{\epsilon}{10}\right)\left((1 - 1/r)^{i-1} (1 - \epsilon/2)\OPT - f(S)\right) $$
 Thus, at iteration $i \leq \rho$, by submodularity, 
\begin{align*}
\EU[R_j] \left[f_{S \cup (R_j \setminus \{o_\ell\})}(o_\ell) \right] &  \geq  \EU[R_1, \ldots, R_\rho] \left[ f_{S \cup O_{\ell-1}\cup \left(\cup_{j=1}^\rho R_j  \setminus \{o_\ell\}\right)}(o_\ell)\right] \\
& = \Delta_\ell \\
& \geq \frac{1}{2k}  \left(1 - \frac{\epsilon}{10}\right)\left((1 - 1/r)^{i-1} (1 - \epsilon/2)\OPT - f(S)\right) \\
&\geq \left(1 + \epsilon/4 \right)\frac{1 - \epsilon/2}{2k}  \left(\left(1 - \frac{1}{r}\right)^{i-1}\OPT - f(S)\right) 
\end{align*}
 and $o_\ell$ survives all iterations $j \leq \rho$, for all $o_\ell \in T$. Next, note that
$$\sum_{\ell =1}^k \Delta_\ell \geq   \EU[R_1, \ldots, R_\rho]\left[f_{S \cup \left(\cup_{j=1}^\rho R_j \right)}\left(O\right)\right]  =  k \Delta.$$

 Next, observe that $$ \sum_{\ell =1}^k \Delta_\ell = \sum_{o_\ell \in T} \Delta_\ell + \sum_{\ell  \in O \setminus T} \Delta_\ell \leq \sum_{o_\ell \in T} \Delta_\ell +  \frac{k}{2}\Delta.$$
By combining the two inequalities above, we get  $\sum_{o_\ell \in T} \Delta_\ell  \geq  \frac{k}{2}  \Delta$. Thus, by submodularity,
\begin{align*}
f_S(T)  \geq \sum_{o_\ell \in T} f_{S \cup O_{\ell-1}}\left(o_\ell\right) 
 \geq \sum_{o_\ell \in T}\EU[R_1, \ldots, R_\rho] \left[ f_{S \cup O_{\ell-1}\cup \left(\cup_{j=1}^\rho R_j \setminus \{o_\ell\}\right)}(o_\ell)\right]  
 = \sum_{o_\ell \in T} \Delta_\ell  
 \geq  \frac{k}{2} \Delta. 
\end{align*}
We conclude that 
\begin{align*}
f_S(T)  \geq  \frac{k\Delta }{2} &  =  \frac{1}{2} \EU[R_1, \ldots, R_\rho]\left[f_{S \cup \left(\cup_{j=1}^\rho R_j \right)}\left(O\right)\right]\\
&  \geq   \frac{1}{2}\left(1 - \frac{\epsilon}{10}\right)\left((1 -1/r)^{i-1} (1 - \epsilon/2)\OPT - f(S)\right) \qedhere
\end{align*}
\end{proof}

\subsubsection{A random subset approximates the best surviving block}
\label{sec:part2}
In the previous part of the analysis, we showed the existence of a surviving set $T$ with contribution at least $\frac{1 - \epsilon/10}{2} \left((1 - 1/r)^{i-1} (1-\epsilon/2)\OPT - f(S)\right)$ to $S$. In this part, we show that the random set $R \cap X^+$, with $R \sim \U(X)$, is a $1/r$ approximation to any surviving set $T \subseteq X^+$ when  $|X| = k$. A key component of the algorithm for this argument to hold for non-monotone functions is the final pre-processing step to restrict $X$ to $X^+$ after adding dummy elements. We use this restriction to argue that every element $a \in R \cap X^+$ must contribute a non-negative expected value to the set returned.

\begin{restatable}{rLem}{lemrandomtwo}
\label{lem:randomtwo}
Assume \filter  \ returns $R \cap X^+$ with $R \sim \U(X)$ and $|X| = k$.  For any $T \subseteq X^+$,  we have
$$\E_{R \sim \U(X)}[f_S(R \cap X^+)] \geq  f_S(T) /r.$$
\end{restatable}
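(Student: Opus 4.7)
The plan is to decompose $R \cap X^+ = (R \cap T) \sqcup (R \cap (X^+ \setminus T))$ (disjoint since $T \subseteq X^+$), so that
$$f_S(R \cap X^+) \;=\; f_S(R \cap T) \;+\; f_{S \cup (R \cap T)}\bigl(R \cap (X^+ \setminus T)\bigr),$$
and to handle the two pieces separately: I will show $\E[f_S(R \cap T)] \geq f_S(T)/r$ and that the second expectation is non-negative.

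For the first piece, I would define $g(Y) := f(S \cup Y)$ on $Y \subseteq T$, which is non-negative and submodular with $g(\emptyset) = f(S)$ and $g(T) = f(S \cup T)$. Because $|X| = k$ and $|R| = k/r$, every element of $T \subseteq X$ belongs to $R$ with probability exactly $1/r$, so Lemma~\ref{lem:random} with $p = 1/r$ gives $\E[g(R \cap T)] \geq (1 - 1/r) f(S) + (1/r) f(S \cup T)$, and subtracting $f(S)$ from both sides yields $\E[f_S(R \cap T)] \geq f_S(T)/r$.

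For the second piece, I would fix any ordering $a_1, \ldots, a_m$ of $R \cap (X^+ \setminus T)$ and telescope to write
$$f_{S \cup (R \cap T)}\bigl(R \cap (X^+ \setminus T)\bigr) \;=\; \sum_{i=1}^m f_{S \cup (R \cap T) \cup \{a_1, \ldots, a_{i-1}\}}(a_i) \;\geq\; \sum_{i=1}^m f_{S \cup (R \setminus a_i)}(a_i),$$
using submodularity together with the inclusion $(R \cap T) \cup \{a_1, \ldots, a_{i-1}\} \subseteq R \setminus a_i$. Taking expectations and grouping by element then gives
$$\E\bigl[f_{S \cup (R \cap T)}\bigl(R \cap (X^+ \setminus T)\bigr)\bigr] \;\geq\; \sum_{a \in X^+ \setminus T} \Pr[a \in R] \cdot \E\bigl[f_{S \cup (R \setminus a)}(a) \,\big|\, a \in R\bigr].$$

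The main obstacle---and the step that really needs the $X^+$ pre-processing---will be showing each conditional expectation above is non-negative. The idea is that conditional on $a \in R$, the set $R \setminus a$ is a uniform $(k/r - 1)$-subset of $X \setminus a$, while conditional on $a \notin R$, $R \setminus a = R$ is a uniform $(k/r)$-subset of $X \setminus a$; a standard coupling (augment the smaller subset by one uniformly chosen element) combined with submodularity gives $\E[f_{S \cup (R \setminus a)}(a) \mid a \in R] \geq \E[f_{S \cup (R \setminus a)}(a) \mid a \notin R]$. Since $\Delta(a, S, X)$ is the convex combination of these two conditional expectations, it is bounded above by the larger one, so $\E[f_{S \cup (R \setminus a)}(a) \mid a \in R] \geq \Delta(a, S, X) \geq 0$ for every $a \in X^+$. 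Combining this with the bound on the first piece yields $\E[f_S(R \cap X^+)] \geq f_S(T)/r$.
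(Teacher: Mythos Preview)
Your argument is correct and follows essentially the same route as the paper: the same decomposition into $R\cap T$ and $(R\cap X^+)\setminus T$, and an identical treatment of the second piece (telescoping, submodularity, and then using $\Delta(a,S,X)\geq 0$ for $a\in X^+$; your coupling argument is precisely what the paper compresses into the single word ``submodularity'' when passing from the conditional to the unconditional expectation).

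The one place you differ is the first piece. The paper bounds $\E[f_S(R\cap T)]$ by telescoping over an ordering of $T$ and using $R\cap T_{i-1}\subseteq T_{i-1}$ to replace $f_{S\cup(R\cap T_{i-1})}(a_i)$ by $f_{S\cup T_{i-1}}(a_i)$, then summing. You instead invoke the two-sided form of Lemma~\ref{lem:random}, $\E[g(A(p))]\geq(1-p)g(\emptyset)+p\,g(A)$. One small caveat: this two-sided inequality is \emph{false} under the ``at most $p$'' hypothesis written in the lemma (take $A(p)=\emptyset$ and $g$ increasing); it does hold when every element appears with probability \emph{exactly} $p$, which is the situation you are in since $|X|=k$. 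The cleanest justification is that for submodular $g$ the Lov\'asz extension equals the convex closure, and at the uniform point $p\mathbf{1}_A$ the convex closure is $(1-p)g(\emptyset)+p\,g(A)$, so any distribution with exact marginals $p$ satisfies the bound. With that clarification your first-piece argument is sound and yields the same $f_S(T)/r$ as the paper's telescoping.
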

\begin{proof}
Let $T \subseteq X^+$. First note that
\begin{align*}
\E[f_S(R \cap X^+)] & = \E[f_S((R \cap X^+) \cap T)] + \E[f_{S \cup ((R \cap X^+) \cap T)}((R \cap X^+) \setminus T)] \\
& =  \E[f_{S}(R  \cap T)] + \E[f_{S  \cup (R \cap T)}((R \cap X^+) \setminus T)].
\end{align*}
 where the second inequality is due to the fact that $T \subseteq X^+$. We first bound $\E[f_S(R \cap T)]$. Let $T = \{a_1, \ldots, a_\ell\}$ be some arbitrary ordering of the elements in $T$ and define $T_i = \{a_1, \ldots, a_i\}.$  Then, 
\begin{align*}
\E[f_S(R  \cap T)]  = \E\left[\sum_{a_i \in R \cap T}f_{S \cup (R \cap T_{i-1})}(a_i) \right]  & \geq  \E\left[\sum_{a_i \in R \cap T}f_{S \cup  T_{i-1}}(a_i) \right]& \text{submodularity} \\
& = \E\left[\sum_{a_i \in  T} \mathds{1}_{a_i \in R} \cdot f_{S \cup  T_{i-1}}(a_i) \right]&  \\
& = \sum_{a_i \in  T} f_{S \cup  T_{i-1}}(a_i) \cdot \E\left[ \mathds{1}_{a_i \in R} \right]&  \\
& = \frac{1}{r}  \sum_{a_i \in  T} f_{S \cup  T_{i-1}}(a_i) &  \\
& = \frac{1}{r}  f_S(T).&  
\end{align*}
where the third equality is due to the fact that  $\Pr[a_i \in R] = \frac{k/r}{|X|} = 1/r$.
Next, we bound $\E[f_S((R \cap X^+) \setminus T)]$. Similarly as in the previous case, assume that $X^+ = \{a_1, \ldots, a_\ell\}$ is some arbitrary ordering of the elements in $X^+$ and define $X^+_i = \{a_1, \ldots, a_i\}$. Observe that
\begin{align*}
\E[f_{S  \cup (R \cap T)}((R \cap X^+) \setminus T)] & = \E\left[\sum_{a_i \in (R \cap X^+) \setminus T} f_{S  \cup (R \cap T) \cup ((R \cap X_{i-1}^+) \setminus T)}(a_i)\right] \\
& \geq \E\left[\sum_{a_i \in (R \cap X^+) \setminus T} f_{S  \cup (R \setminus a_i)}(a_i)\right] & \text{submodularity} \\
& = \E\left[\sum_{a_i \in  X^+ \setminus T} \mathds{1}_{a_i \in R} \cdot  f_{S  \cup (R \setminus a_i)}(a_i)\right] &  \\
& = \sum_{a_i \in  X^+ \setminus T} \E\left[ \mathds{1}_{a_i \in R} \cdot  f_{S  \cup (R \setminus a_i)}(a_i)\right] &  \\
& = \sum_{a_i \in  X^+ \setminus T}  \Pr[a_i \in R]\cdot   \E\left[ f_{S  \cup (R \setminus a_i)}(a_i) | a_i \in R \right] &  \\
& \geq \sum_{a_i \in  X^+ \setminus T}  \Pr[a_i \in R]\cdot   \E\left[ f_{S  \cup (R \setminus a_i)}(a_i)  \right]   & \text{submodularity}  \\
& \geq \sum_{a_i \in  X^+ \setminus T}  \Pr[a_i \in R] \cdot 0 & a_i \in X^+ 
\end{align*}
We conclude that $\E[f_S(R \cap X^+)]  \geq \frac{1}{r}  f_S(T)$
\end{proof}

There is a tradeoff between the contribution $f_S(T)$ of the best surviving set $T$ and the contribution of a random set $R \cap X^+$ returned in the middle of an iteration due to the thresholds  $(1+\epsilon/4)t/k$ and $t/r$, which is controlled by $t$. The optimization of this tradeoff explains the $(1-\epsilon/2)/2$ term in $t$.

%

\subsubsection{Proof of main lemma}

\begin{proof}[Proof of Lemma~\ref{lem:main}]
There are two cases. If \filter \ returns $R \cap X^+$ in the middle of an iteration, then by the condition to return that set, $\E_{R \sim \U(X)}\left[ f_S(R 
   	\cap X^+)\right]  \geq t/r = \frac{1 - \epsilon/2}{2} (\left(1 - 1/r\right)^{i-1}(1-\epsilon/2)\OPT - f(S))/r$.   	Otherwise, \filter  \ returns $R \cap X^+$ with $|X| = k$. By Lemma~\ref{lem:Tstar}, there exists $T \subseteq X_{\rho}$ that survives $\rho$ iterations of \filter \ s.t. $f_S(T) \geq  \frac{1 - \epsilon/10}{2} \left((1 - 1/r)^{i-1} (1-\epsilon/2)\OPT - f(S)\right).$ Since there are at most $\rho - 1$ iterations of \filter, $T$ survives each iteration and the final pre-processing. This implies that
     	$T \subseteq X^+$ when the algorithm terminates.  By Lemma~\ref{lem:randomtwo}, we then conclude that
 $\E_{R \sim \U(X)}[f_S(R \cap X^+)] \geq  f_S(T) /r \geq \frac{1 - \epsilon/10}{2r} \left((1 - 1/r)^{i-1} (1-\epsilon/2)\OPT - f(S)\right) \geq t/r$.
 %
\end{proof}

\subsection{The adaptivity of \filter \ is $\O(\log n)$}
\label{sec:part3}

We now observe that the number of iterations of \filter \ is $\O(\log n)$. This logarithmic adaptivity is due to the fact that  \filter \ either returns a random set or discards a constant fraction of the surviving elements at every iteration. Similarly to Section~\ref{sec:part2}, the pre-processing step to obtain $X^+$ is crucial to argue that since a random subset $R \cap X^+$ has contribution below the $t/r$ threshold and since all elements in $X^+$ have non-negative marginal contributions, there exists a large set of elements in $X^+$ with expected marginal contribution to $S \cup R$ that is below the $(1+\epsilon/4)t/k$ threshold.

\begin{restatable}{rLem}{lempruning}
\label{lem:pruning}
Let $X_j$ and $X_{j+1}$ be the surviving elements $X$ at the start and end of iteration $j$ of \filter$(S, k, i, r)$. For all $S \subseteq N$ and $r, j, \epsilon > 0$, if \filter$(S, k, i, r)$ does not terminate at iteration $j$, then
$|X_{j+1}| < |X_j|/({1+\epsilon/4}).$
\end{restatable}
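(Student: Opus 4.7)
The plan is to prove the contrapositive: assume $|X_{j+1}| \geq |X_j|/(1+\epsilon/4)$ (with $t > 0$; otherwise the early-return threshold $t/r$ is non-positive and \filter\ would have terminated by the non-negativity argument below), and derive $\E_R[f_S(R \cap X^+)] \geq t/r$. This is exactly the termination condition inside \filter, contradicting the hypothesis that iteration $j$ does not terminate. The whole argument reduces to lower bounding the expected value of the random block by a weighted sum of $\Delta(a,S,X_j)$'s over the surviving pool.

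First I would lower bound $\E_R[f_S(R \cap X^+)]$ by per-element contributions. Fix any ordering of $A := R \cap X^+$ and expand $f_S(A)$ telescopically as $\sum_i f_{S \cup \{a_1,\dots,a_{i-1}\}}(a_i)$; submodularity applied twice---first replacing each prefix by $A \setminus a_i$, then replacing $A \setminus a_i$ by the larger set $R \setminus a_i$---gives the pointwise bound $f_S(R \cap X^+) \geq \sum_{a \in R \cap X^+} f_{S \cup (R \setminus a)}(a)$. Taking expectation and moving the indicator $\mathds{1}_{a \in R}$ outside a sum over $X^+$ yields
$$\E_R[f_S(R \cap X^+)] \;\geq\; \sum_{a \in X^+} \E_R\!\left[\mathds{1}_{a \in R}\, f_{S \cup (R \setminus a)}(a)\right].$$

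The main obstacle is that the summands are covariance-like and not equal to $\Pr[a \in R]\cdot \Delta(a,S,X_j)$ outright. Intuitively the indicator $\mathds{1}_{a \in R}$ and the marginal $f_{S\cup(R\setminus a)}(a)$ are positively correlated, since conditioning on $a \in R$ shrinks $R \setminus a$ from size $k/r$ to $k/r-1$ and submodularity rewards smaller conditioning sets. I would formalize this by coupling: given $R \mid a \notin R$, which is uniform over size-$k/r$ subsets of $X_j \setminus \{a\}$, delete a uniformly chosen element to obtain a size-$(k/r{-}1)$ subset of $X_j \setminus \{a\}$; by symmetry this has the same distribution as $R \setminus a \mid a \in R$ and is contained in the original set. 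Submodularity on this nested pair yields $\E[f_{S \cup (R \setminus a)}(a) \mid a \in R] \geq \E[f_{S \cup R}(a) \mid a \notin R]$, and since $\Delta(a,S,X_j)$ is a convex combination of these two conditional expectations, the larger one dominates $\Delta(a,S,X_j)$. Hence $\E_R[\mathds{1}_{a \in R}\, f_{S \cup (R \setminus a)}(a)] \geq \Pr[a \in R]\cdot \Delta(a,S,X_j) = \tfrac{k/r}{|X_j|}\,\Delta(a,S,X_j)$.

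Combining the previous steps, $\E_R[f_S(R \cap X^+)] \geq \tfrac{k/r}{|X_j|} \sum_{a \in X^+} \Delta(a,S,X_j)$. Because $(1+\epsilon/4)t/k > 0$ we have $X_{j+1} \subseteq X^+$, so I may drop the (non-negative) contributions from $X^+ \setminus X_{j+1}$, apply the per-element bound $\Delta(a,S,X_j) \geq (1+\epsilon/4)t/k$ on $X_{j+1}$, and invoke the standing assumption $|X_{j+1}| \geq |X_j|/(1+\epsilon/4)$. The constants telescope and the bound collapses to exactly $t/r$, contradicting non-termination and proving the lemma.
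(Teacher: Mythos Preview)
Your proof is correct and follows essentially the same route as the paper's: both lower bound $\E_R[f_S(R\cap X^+)]$ by telescoping and submodularity to reach $\sum_{a\in X^+}\Pr[a\in R]\cdot\E[f_{S\cup(R\setminus a)}(a)\mid a\in R]$, then compare this conditional expectation to the unconditional $\Delta(a,S,X_j)$, split the sum over $X_{j+1}$ versus $X^+\setminus X_{j+1}$, and combine the two thresholds to force the size drop. Your contrapositive framing and explicit coupling justification for $\E[f_{S\cup(R\setminus a)}(a)\mid a\in R]\geq \Delta(a,S,X_j)$ spell out the step the paper compresses into the single word ``submodularity,'' but the underlying argument is identical.
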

\begin{proof}
At a high level, since the surviving elements must have high value and a random set has low value, we can then use the thresholds to bound how many such surviving elements there can be while also having a random set of low value. To do so, we focus on the value of $f(R_j \cap X_{j+1})$ of the surviving elements $X_{j+1}$ in a random set $R_j \sim \D_{X_j}$.

 We denote by $\{a_1, \ldots, a_\ell\}$ the elements in $R_j \cap X_j^+ $.  Observe that
\begin{align*}
& \E \left[f_S(R_j \cap X^+_j)\right] \\
 = & \E\left[\sum_{j =1}^\ell f_{S \cup \{a_1, \ldots, a_{j-1}\}}(a_j)\right] & \\
 \geq  & \E\left[\sum_{a \in R_j \cap X_{j}^+} f_{S \cup (R_j  \setminus a)}(a)\right] & \text{submodularity}\\
 =  & \E\left[\sum_{a \in X_{j}^+} \mathds{1}_{a \in R_j} \cdot  f_{S \cup (R_j \setminus a)}(a)\right]  & \\
 =  & \sum_{a \in X_{j}^+} \E\left[\mathds{1}_{a \in R_j} \cdot  f_{S \cup (R_j \setminus a)}(a)\right]. & \\
  = & \sum_{a \in X_{j}^+} \Pr\left[a \in R_j\right] \cdot \E\left[ f_{S \cup (R_j \setminus a)}(a) | a \in R_j\right] & \\
 \geq & \sum_{a \in X_{j}^+} \Pr\left[a \in R_j\right] \cdot \E\left[ f_{S \cup (R_j \setminus a)}(a) \right] & \text{submodularity}
 \end{align*} 
 Then, for $a \in X_j^+$, either $a$ survives this iteration $i$ and $a \in X_{j+1}$ or it is filtered and $a \in X_j^+ \setminus X_{j+1}$. If $a \in X_{j+1}$, then $\E\left[ f_{S \cup (R_j \setminus a)}(a) \right]  \geq \left(1 + \epsilon/4 \right)t/k$ by the algorithm and $\Pr\left[a \in R_j\right] = \frac{k}{r|X_j|}$ by the definition of $\U(X)$. Thus,
$$\Pr\left[a \in R_j\right] \cdot \E\left[ f_{S \cup (R_j \setminus a)}(a) \right] \geq \frac{1}{r|X_j|} \cdot\left(1 + \epsilon/4 \right)t.$$
If $a \in X_j^+ \setminus X_{j+1}$, then $\E\left[ f_{S \cup (R_j \setminus a)}(a) \right]  \geq 0$ by the definition of $X_j^+$. Putting all the previous pieces together, we get 
$$\E \left[f_S(R_j \cap X^+_j)\right]  \geq |X_{j+1}| \cdot \frac{1}{r|X_j|} \cdot\left(1 + \epsilon/4 \right)t.$$
 Next, since elements are discarded, a random set must have low value by the algorithm,
$t/r \geq \E \left[f_S(R_j \cap X_j^+)\right].$
  Finally, by combining the above  inequalities, we  conclude that $|X_{j+1}| \leq |X_j|/ (1+\epsilon/4)$. 
\end{proof}


\subsection{Main result for \algone}


\begin{theorem}  For any constant $\epsilon > 0$, \algone  \ initialized with $r = 20 \epsilon^{-1} \log_{1+\epsilon/2}(n) $ is  $\mathcal{O}\left(\log^2 n\right)$-adaptive and obtains a $\frac{1  - \epsilon}{2e}$ approximation.
\end{theorem}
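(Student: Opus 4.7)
The plan is to assemble the theorem directly from the three lemmas already established: Lemma~\ref{lem:meta} (reduction to per-iteration marginal contribution), Lemma~\ref{lem:main} (per-call guarantee of \filter), and Lemma~\ref{lem:pruning} (logarithmic adaptivity of \filter). No new probabilistic or submodular arguments are needed; the theorem is a bookkeeping exercise that checks the parameter $r = 20\epsilon^{-1}\log_{1+\epsilon/2}(n)$ is simultaneously large enough for the approximation machinery and small enough for the adaptivity bound.

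For the approximation guarantee, I would first bound the number of inner iterations $\rho$ that any call $\filter(S,k,i,r)$ makes. By Lemma~\ref{lem:pruning}, each non-terminating iteration shrinks $|X|$ by a factor of at least $1+\epsilon/4$, so starting from $|X|=n$ we must have $\rho = \O(\log_{1+\epsilon/4}n)$. Plugging our choice of $r$ into the hypothesis $r \geq 20\rho\epsilon^{-1}$ of Lemma~\ref{lem:main} (absorbing any $\Theta(1)$ slack into constants), the lemma applies to every outer iteration and yields
\[
\EU\bigl[f_{S_{i-1}}(T_i)\bigr] \;\geq\; \frac{t}{r} \;=\; \frac{1-\epsilon/2}{2r}\left(\left(1-\frac{1}{r}\right)^{i-1}(1-\epsilon/2)\OPT - f(S_{i-1})\right).
\]
This is precisely the hypothesis of Lemma~\ref{lem:meta} with $\alpha = (1-\epsilon/2)/2$ and $v^{\star} = (1-\epsilon/2)\OPT$, so we conclude
\[
\EU\bigl[f(S_r)\bigr] \;\geq\; \frac{\alpha}{e}\, v^{\star} \;=\; \frac{(1-\epsilon/2)^2}{2e}\,\OPT \;\geq\; \frac{1-\epsilon}{2e}\,\OPT,
\]
using $(1-\epsilon/2)^2 \geq 1-\epsilon$.

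For adaptivity, the outer loop runs $r = 20\epsilon^{-1}\log_{1+\epsilon/2}(n) = \O(\log n)$ times for any fixed $\epsilon$. Each call to \filter executes $\O(\log_{1+\epsilon/4}n) = \O(\log n)$ while-loop iterations by Lemma~\ref{lem:pruning}, and within a single such iteration all quantities $\Delta(a,S,X)$ for $a \in X$ and $\E_{R\sim \U(X)}[f_S(R\cap X^+)]$ can be queried in parallel in $\O(1)$ adaptive rounds (using the sampling estimator described in the paragraph following the idealized algorithm). Multiplying the two logarithmic factors gives the claimed $\O(\log^2 n)$ adaptivity.

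The only substantive obstacle is verifying that the single parameter $r$ reconciles both requirements: it must exceed $20\rho/\epsilon$ for Lemma~\ref{lem:main} yet remain $\O(\log n)$ for the adaptivity count. Since $\rho$ itself is $\O(\log n / \epsilon)$ by Lemma~\ref{lem:pruning}, both bounds hold for the stated choice of $r$ up to absolute constants; no other step requires more than routine algebra.
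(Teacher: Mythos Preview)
Your proposal is correct and follows essentially the same route as the paper: invoke Lemma~\ref{lem:pruning} to bound the number of \filter\ iterations, feed that into Lemma~\ref{lem:main} to obtain the per-iteration guarantee, and then apply Lemma~\ref{lem:meta} with $\alpha=(1-\epsilon/2)/2$ and $v^\star=(1-\epsilon/2)\OPT$; the adaptivity count is the same product of two $\O(\log n)$ factors. Your explicit check that $r\geq 20\rho\epsilon^{-1}$ is a detail the paper glosses over, and your remark that the stated $r$ only meets this up to an absolute constant is accurate (indeed the paper's own proof silently switches from $\log_{1+\epsilon/2}$ to $\log_{1+\epsilon/4}$, so the discrepancy lives in the theorem statement, not your argument).
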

\begin{proof}
By Lemma~\ref{lem:main}, we have $\E[f_S(R)] \geq  \frac{1-\epsilon/2}{2}\left(\left(1 - \frac{1}{r}\right)^{i-1} (1-\epsilon/2)\OPT - f(S)\right).$ Thus, by Lemma~\ref{lem:meta} with $\alpha = \frac{1-\epsilon/2}{2}$ and $v^{\star} = (1-\epsilon/2)\OPT$, \algone \  returns $S$ that satisfies $\E\left[f(S)\right] \geq\frac{1 - \epsilon/2}{2e} \cdot (1-\epsilon/2)\OPT \geq \frac{1 - \epsilon}{2e} \cdot \OPT .$ For adaptivity, note that each iteration of \filter \ has two adaptive rounds: one for $\Delta(a, S, X)$ for all $a \in N$ and one for $\E_{R \sim \U(X)}\left[ f_S(R 
   	\cap X^+)\right]$.  Since $|X|$ decreases by a $1+ \epsilon/4$ fraction at every iteration of \filter, every call to \filter \ has  at most $\log_{1+\epsilon/4}(n)$ iterations. Finally,  as there are $r = 20 \epsilon^{-1} \log_{1+\epsilon/4}(n)$ iterations of \algone, the adaptivity is $\mathcal{O}\left(\log^2 n\right)$.
\end{proof}

\section{Experiments}
\label{sec:experiments}
Our goal in this section is to show that beyond its provable guarantees, \algone \ performs well in practice across a variety of application domains. Specifically, we are interested in showing that despite the fact that the parallel running time of our algorithm is smaller by several orders of magnitude than that of any known algorithm for maximizing non-monotone submodular functions under a cardinality constraint, the quality of its solutions are consistently competitive with or superior to those of state-of-the-art algorithms for this problem. To do so, we conduct two sets of experiments where the goal is to solve the problem of $\max_{S:|S|\leq k}f(S)$ given a function $f$ that is submodular and non-monotone. In the first set of experiments, we test our algorithm on the classic max-cut objective  evaluated on graphs generated by various random graph models. In the second set of experiments, we apply our algorithm to a max-cut objective on a new road network dataset, and we also benchmark it on the three objective functions and datasets used in \citep{mirzasoleiman2016fast}. In each set of experiments, we compare the quality of solutions found by our algorithm to those found by several alternative algorithms.

\subsection{Experiment set I: cardinality constrained max-cut on synthetic graphs}
%
%
%
Given an undirected graph $G=(N,E)$, recall that the cut induced by a set of nodes $S\subseteq N$ denoted $C(S)$ is the set of edges that have one end point in $S$ and another in $N\setminus S$. The cut function $f(S) = |C(S)|$ is a quintessential example of a non-monotone submodular function. To study the performance of our algorithm on different cut functions, we use four well-studied random graph models that yield cut functions with different properties. For each of these graphs, we run the algorithms from Section~\ref{sec:experiments_algs} to solve $\max_{S:|S|\leq k}|C(S)|$ for different $k$: 

\begin{itemize}
\item \textbf{Erd\H{o}s R\'{e}nyi.}  We construct a $G(n,p)$ graph with $n=1000$ nodes, $p=1/2$, and use $k = 700$. Since each node's degree is drawn from a Binomial distribution, many nodes will have a similar marginal contribution to the cut function, and a random set $S$ may perform well.

\item \textbf{Stochastic block model.} We construct an SBM graph with 7 disconnected clusters of 30 to 120 nodes, a high $(p=0.8)$ probability of an edge within each cluster, and use $k = 360$.  Unlike for $G(n,p)$, here we expect a set $S$ to achieve high value only by covering all of the clusters.

\item \textbf{Barb\'{a}si-Albert.} We create a graph with $n=500$, $m=100$ edges added per iteration, and use $k = 333$. We expect that a relatively small number of nodes will have high degree in this model, so a set $S$ consisting of these nodes will have much greater value than a random set.

\item \textbf{Configuration model.} We generate a configuration model graph with $n=500$, a power law degree distribution with exponent $2$, and use $k = 333$. 
Although configuration model graphs are similar to {Barb\'{a}si-Albert} graphs, their high degree nodes are not connected to each other, and thus greedily adding these high degree nodes to $S$ is a good heuristic.  
\end{itemize}


\subsection{Experiment set II: performance benchmarks on real data}
To measure the performance of \algone \ on real data, we use it to optimize four different objective functions, each on a different dataset. Specifically, we consider a traffic monitoring application as well as three additional applications introduced and experimented with in \citep{mirzasoleiman2016fast}: image summarization, movie recommendation, and revenue maximization. We note that while these applications are sometimes modeled with monotone objectives, there are many advantages to using non-monotone objectives (see \citep{mirzasoleiman2016fast}). We briefly describe these objective functions and data here and provide additional details in Appendix~\ref{sec:appendix_real_data}.



\begin{figure}[t]
\centering
\includegraphics[height=0.206\textwidth]{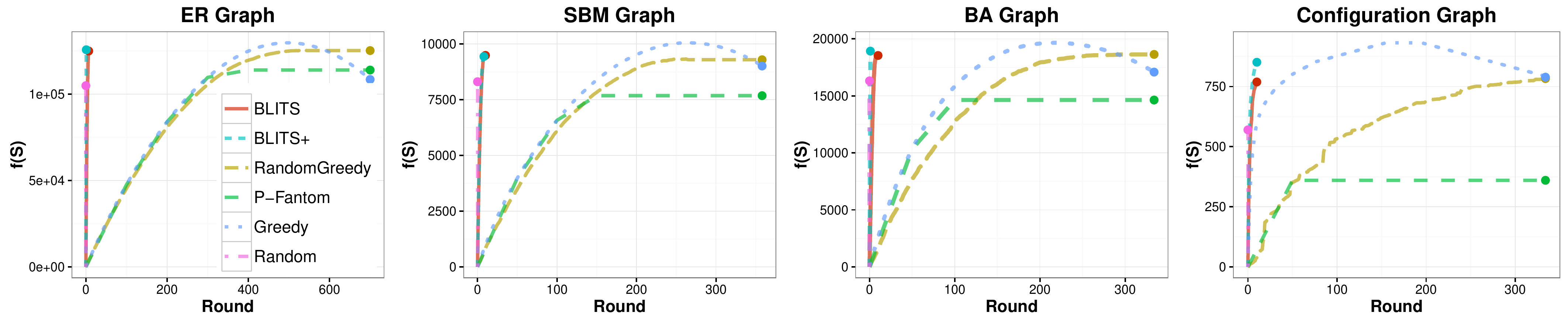}
\caption{\textit{Experiments Set 1: Random Graphs.} Performance of \algone \ (red) and \algtwo \ (blue) versus \textsc{RandomGreedy} (yellow), \textsc{P-Fantom} (green), \textsc{Greedy} (dark blue), and \textsc{Random} (purple).}
\label{fig:rand_graphs}
\end{figure}

\begin{itemize}
\label{ssec:real_dat_exp}

\item \textbf{Traffic monitoring.} Consider an application where a government has a budget to build a fixed set of monitoring locations to monitor the traffic that enters or exits a region via its transportation network. Here, the goal is not to monitor traffic circulating within the network, but rather to choose a set of locations (or nodes) such that the volume of traffic entering or exiting via this set is maximal. To accomplish this, we optimize a cut function defined on the weighted transportation network. More precisely, we seek to solve $\max_{S:|S|\leq k} f(S)$, where $f(S)$ is the sum of weighted edges (e.g. traffic counts between two points) that have one end point in $S$ and another in $N\setminus S$. To conduct an experiment for this application, we reconstruct California's highway transportation network using data from the CalTrans PeMS system~\citep{pems}, which provides real-time traffic counts at over 40,000 locations on California's highways, with $k = 300$. Appendix~\ref{sec:appendix_traffic} details on this network reconstruction. The result is a directed network in which nodes are locations along each direction of travel on each highway and edges are the total count of vehicles that passed between adjacent locations in April, 2018.



\item \textbf{Image summarization.} Here we must select a subset to represent a large, diverse set of images. This experiment uses 500 randomly chosen images from the \emph{10K} Tiny Images dataset~\citep{tinyimages} with $k = 80$. We measure how well an image represents another by their cosine similarity.


\item \textbf{Movie recommendation.} Here our goal is to recommend a diverse short list $S$ of movies for a user based on her ratings of movies she has already seen. We conduct this experiment on a randomly selected subset of 500 movies from the MovieLens dataset \citep{movielens} of 1 million ratings by 6000 users on 4000 movies with $k = 200$. Following \citep{mirzasoleiman2016fast}, we define the similarity of one movie to another as the inner product of their raw movie ratings vectors.


\item \textbf{Revenue maximization.} Here we choose a subset of $k = 100$ users in a social network to receive a product for free in exchange for advertising it to their network neighbors, and the goal is to choose users in a manner that maximizes revenue. We conduct this experiment on 25 randomly selected communities ($\sim$1000 nodes) from the 5000 largest communities in the YouTube social network ~\citep{youtube}, and we randomly assign edge weights from $\U (0,1)$.

\end{itemize}

\begin{figure}[t]
\centering
\includegraphics[height=0.206\textwidth]{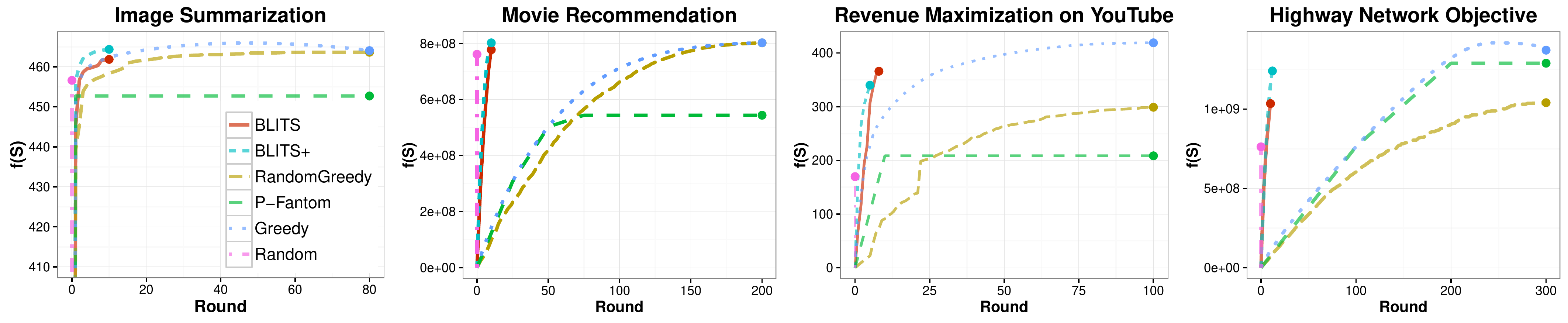}
\caption{\textit{Experiments Set 2: Real Data.} Performance of \algone \  (red) and \algtwo \  (blue) versus \textsc{RandomGreedy} (yellow), \textsc{P-Fantom} (green), \textsc{Greedy} (dark blue), and \textsc{Random} (purple).}
\label{fig:real_data}
\end{figure}

\subsection{Algorithms}\label{sec:experiments_algs} 
We implement a version of \algone \ exactly as described in this paper as well as a slightly modified heuristic, \algtwo .  The only difference is that whenever a round of samples has marginal value exceeding the threshold, \algtwo \ adds the highest marginal value sample to its solution instead of a randomly chosen sample.  \algtwo \  does not have any approximation guarantees but slightly outperforms \algone \ in practice.  We compare these algorithms to several benchmarks:

  
%
\begin{itemize}
\item \textbf{RandomGreedy}. This algorithm adds an element chosen u.a.r. from the $k$ elements with the greatest marginal contribution to $f(S)$ at each round.  It is a $1/e$ approximation for non-monotone objectives and terminates in $k$ adaptive rounds \cite{Buch14}.
\item \textbf{P-Fantom}.  \textsc{P-Fantom} is a parallelized version of the \textsc{Fantom} algorithm in~\citep{mirzasoleiman2016fast}.  \textsc{Fantom} is the current state-of-the-art algorithm for non-monotone submodular objectives, and its main advantage is that it can maximize a non-monotone submodular function subject to a variety of intersecting constraints that are far more general than cardinality constraints. The parallel version, \textsc{P-Fantom}, requires $\O(k)$ rounds and gives a $1/6 - \epsilon$ approximation.
\end{itemize} 


We also compare our algorithm to two reasonable heuristics:

\begin{itemize}
\item \textbf{Greedy.} \textsc{Greedy} iteratively adds the element with the greatest marginal contribution at each round.  It is $k$-adaptive and may perform arbitrarily poorly for non-monotone functions.
\item \textbf{Random.} This algorithm merely returns a randomly chosen set of $k$ elements. It performs arbitrarily poorly in the worst case but requires 0 adaptive rounds.
\end{itemize}

\subsection{Experimental results}
For each experiment, we analyze the value of the algorithms' solutions over successive rounds (Fig. \ref{fig:rand_graphs} and \ref{fig:real_data}). The results support four conclusions. First, \algone \ and/or \algtwo \  nearly always found solutions whose value matched or exceeded those of \textsc{Fantom} and \textsc{RandomGreedy}--- the two alternatives we consider that offer approximation guarantees for non-monotone objectives. This also implies that  \algone \  found solutions with value far exceeding its own approximation guarantee, which is less than that of \textsc{RandomGreedy}. Second, our algorithms also performed well against the top-performing algorithm --- \textsc{Greedy}. Note that \textsc{Greedy}'s solutions decrease in value after some number of rounds, as \textsc{Greedy} continues to add the element with the highest marginal contribution each round even when only negative elements remain. While \algone's  solutions were slightly eclipsed by the \emph{maximum} value found by \textsc{Greedy} in five of the eight experiments, our algorithms matched \textsc{Greedy} on Erd\H{o}s R\'{e}nyi graphs, image summarization, and movie recommendation. Third, our algorithms achieved these high values despite the fact that their solutions $S$ contained $\sim$10-15\% fewer than $k$ elements, as they removed negative elements before adding blocks to $S$ at each round. This means that they could have actually achieved even higher values in each experiment if we had allowed them to run until $|S|= k$ elements. Finally, we note that \algone \  achieved this performance in many fewer adaptive rounds than alternative algorithms. Here, it is also worth noting that for all experiments, we initialized \algone \  to use only $30$ samples of size $k/r$ per round --- far fewer than the theoretical requirement necessary to fulfill its approximation guarantee. We therefore conclude that in practice, \algone \ 's superior adaptivity does not come at a high price in terms of sample complexity.

%
%
%

\newpage
\bibliographystyle{alpha}
 \bibliography{biblio}

\appendix

\newpage

\section{Additional Discussion on Parallel Computing and Depth}
\label{sec:pram}

In the PRAM model,  the notion of depth measures the parallel runtime of an algorithm and is closely related to the concept of adaptivity. The \emph{depth} of a PRAM algorithm is the number of parallel steps of this algorithm on a shared memory machine with any number of processors. In other words, it is the longest chain of dependencies of the algorithm, including operations which are not necessarily queries. The problem of designing low-depth algorithms is well-studied , e.g. \cite{blelloch1996programming, blelloch2011linear,  berger1989efficient, rajagopalan1998primal, blelloch1998fast, blelloch2012parallel}. Our positive results extend to the PRAM model with \algone \ having  $\tilde{\mathcal{O}}(\log^3 n\cdot d_f)$ depth, where $d_f$ is the depth required to evaluate the function on a set.  
 The operations that our algorithms performed at every round, which are  set union and set difference over an input of size at most quasilinear,  can all be executed by algorithms with logarithmic depth using treaps \cite{blelloch1998fast}.
 While the PRAM model assumes that the input is loaded in memory, we consider the value query model where the algorithm is given oracle access to a function of potentially exponential size. 

\section{Missing Analysis from Section~\ref{sec:algorithm1}}
\label{sec:appalgorithm1}

\lemmeta*
\begin{proof}
We show by induction that $\E\left[f(S_i)\right] \geq \frac{i \alpha}{r} \left(1 - \frac{1}{r}\right)^{i-1} v^{\star}$. Observe that
\begin{align*}
\E\left[f(S_i)\right] & = \E\left[f(S_{i-1})\right] + \E\left[f_{S_{i-1}}(T_i)\right] & \\
& \geq \E\left[f(S_{i-1})\right]  + \frac{\alpha}{r} \left(\left(1 - \frac{1}{r}\right)^{i-1}v^{\star} - f(S_{i-1})\right) \\
& \geq \left(1 - \frac{\alpha}{r} \right) \E\left[f(S_{i-1})\right]  + \frac{\alpha}{r} \left(1 - \frac{1}{r}\right)^{i-1}v^{\star}  & \text{inductive hypothesis}\\
& \geq \left(1 - \frac{\alpha}{r} \right) \frac{(i-1) \alpha}{r} \left(1 - \frac{1}{r}\right)^{i-2} v^{\star}  + \frac{\alpha}{r} \left(1 - \frac{1}{r}\right)^{i-1}v^{\star}  \\
& \geq \frac{i \alpha}{r} \left(1 - \frac{1}{r}\right)^{i-1} v^{\star} & \alpha \leq 1
\end{align*}
Thus, with $i = r$,
$$ \E\left[f(S_r)\right] =  \alpha \left(1 - \frac{1}{r}\right)^{r-1} v^{\star}  \geq \frac{\alpha}{e} v^{\star}.$$
\end{proof}


\section{Experiments and Implementation Details}
Here we provide details regarding the data, objective functions, and implementations for our experiments.
\label{sec:appendix_real_data}

\subsection{California highway network experiment}
\label{sec:appendix_traffic}

\textbf{Motivation.} Consider the following set of problems: a state authority must choose where to build freeway stations to measure the commodities that are imported or exported from the state; a country bordering a disease epidemic must decide which border entry points will receive extra equipment to monitor the health of those who wish to enter; a law enforcement agency is charged with deciding where to deploy roadblocks so as to maximize the chance of arresting a suspect fleeing to the border. These problems and many others share two aspects in common: First, success depends on choosing a set of locations (or nodes) on a transportation network such that the volume of traffic entering or exiting via this set is maximal. Note that this objective differs starkly from classic vertex cover objectives, as in our cases we are not interested in commodities/people circulating within the state, so the volume of traffic (edge weight) moving between the chosen set of monitoring locations provides no value. Second, government authorities often have fixed project budgets (or, in the case of law enforcement, a fixed number of officers) and monitoring equipment carries fixed costs regardless of where it is installed. Therefore, these problems are accurately modeled via a cardinality constraint on the number of monitoring locations (nodes). Therefore, we propose that these problems can be modeled by applying the the cardinality-constrained max-cut objective to a directed, weighted transportation network. Specifically, we seek to solve $\max_{S:|S|\leq k} f(S)$, where $f(S)$ is the sum of weighted edges (e.g. traffic counts between two points) that have one end point in $S$ and another in $N\setminus S$.

\begin{figure}
\centering
\includegraphics[height=.4\textwidth]{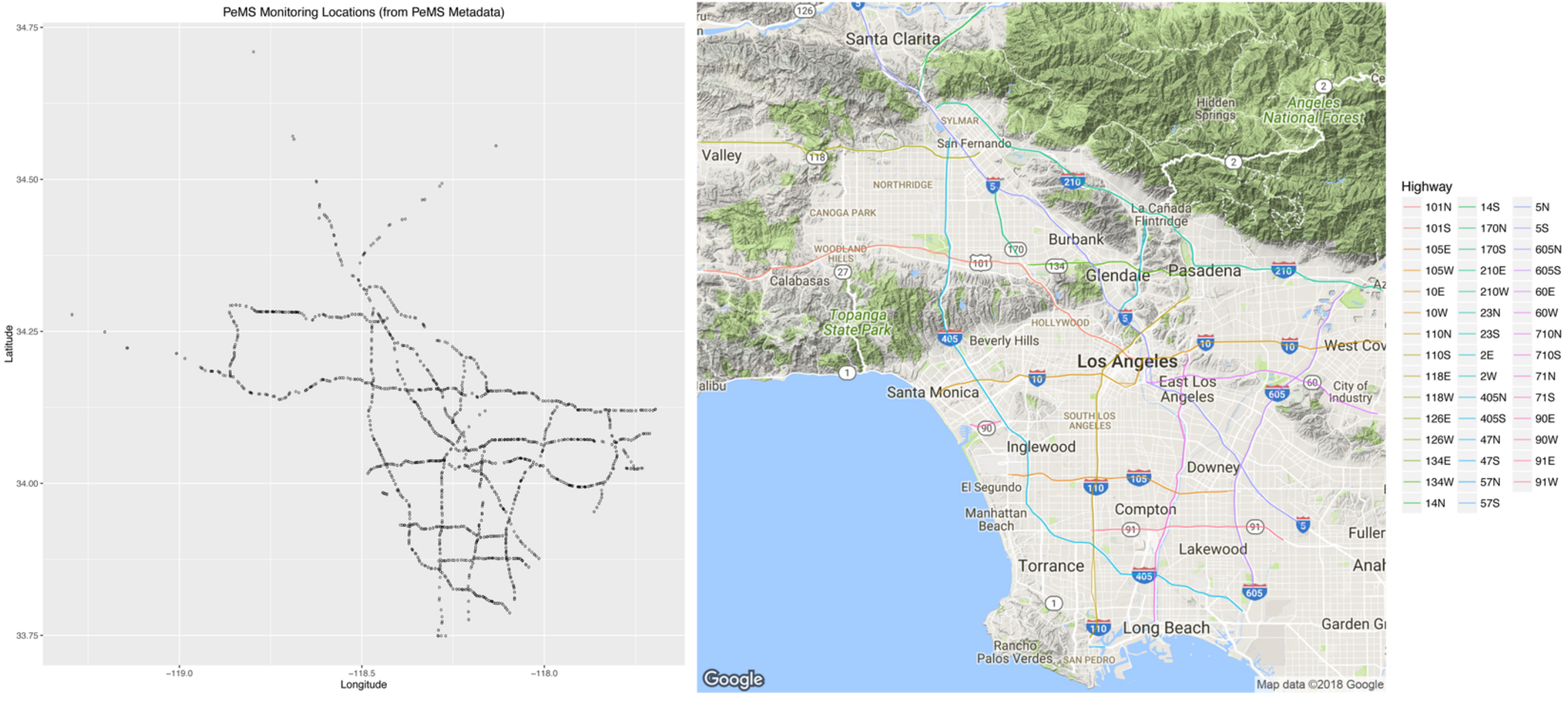}
\caption{(left) Raw metadata latitude and longitude plotted for $\sim$40,000 traffic counting monitors along CA highways; and (right) inferred highway network (each highway plotted in a different color).}
\label{fig:pemsraw}
\end{figure}

\begin{wrapfigure}{r}{0.38\textwidth}
\includegraphics[height=.38\textwidth]{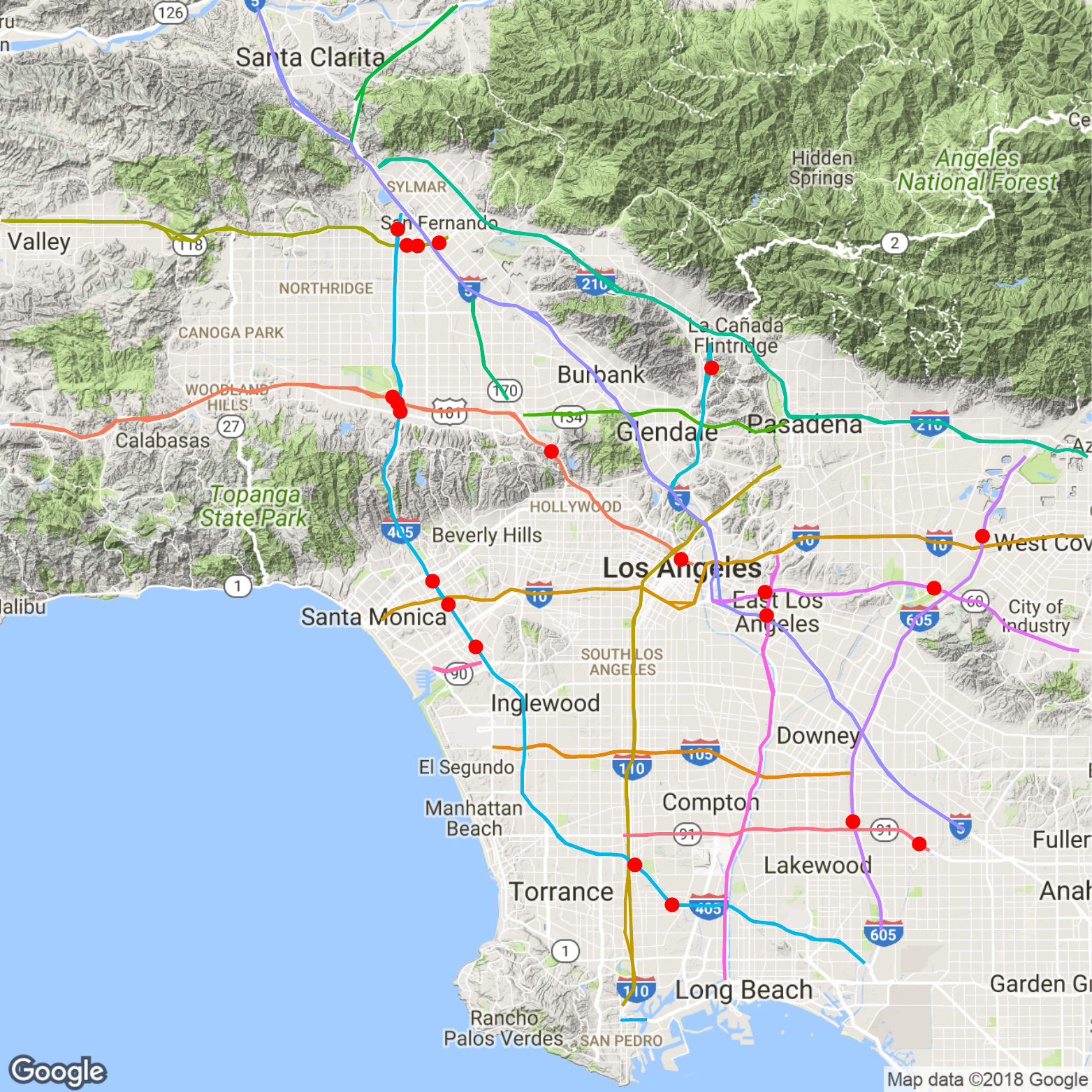}
\caption{The first 25 highway monitoring locations (red points) that \algone \ adds to its solution. Colored lines represent each of the 22 highways in the highway network inferred from metadata on the 3932 PeMs monitoring stations in Los Angeles and Ventura County.}
\label{fig:calisolutionmap}
\end{wrapfigure}

\textbf{Road network reconstruction.}
We reconstruct California's highway and freeway transportation network using data from the California Department of Transportation's (CalTrans) PeMs system \citep{pems}, which provides real-time traffic counts at over 40,000 locations along California's highways. Specifically, PeMs reports real-time traffic counts and metadata for each monitoring location (see Fig. \ref{fig:pemsraw}), but does not provide network information such as which stations form a sequence along each highway, and only a subset of monitoring stations adjacent to highway intersections are noted as such. Therefore, we cross-reference each PeMs traffic monitoring station's latitude, longitude, highway, and directional metadata with the Google Maps Location API to infer this network. The result is the network plotted on the right side of Fig. \ref{fig:pemsraw}. Specifically, nodes in this network are locations along each direction of travel on each highway and directed edges are the total count of vehicles that passed between adjacent locations for the month of April, 2018. Finally, we use the Google API to infer the location of missing edges representing highway intersections, and we impute these edges' respective edge weights using a simple linear model trained on the highway intersection traffic counts that are present in the data. In the spirit of our proposed applications, we restrict our network to the 22 highways comprising 3932 traffic monitors in LA and Ventura, and we restrict our solution to nodes within a 10mi radius of the Los Angeles network center. 

Fig. \ref{fig:calisolutionmap} plots the first 25 highway monitoring locations chosen by \algone \ against this highway network.

%
%

\subsection{Image, movie, and YouTube experiments}
\textbf{Image summarization experiment.}
In the image summarization application, we are given a large collection $X$ of images and we must select a small representative subset $S$. Following \citep{mirzasoleiman2016fast}, our goal is to choose our representative subset of images so that at least one image in this subset is similar to each image in the full collection, but the subset itself is diverse. These concerns inform the first and second terms of the non-monotone submodular objective function they propose:

\begin{align}
\label{eqn:imageobj}
f(S) = \sum_{i \in X} \max_{j \in S} s_{i,j} -  \frac{1}{|X|}\sum_{j \in S} \sum_{k \in S}s_{j,k}
\end{align}
where $s_{i,j}$ represents the cosine similarity of image $i$ to image $j$.

\textbf{Image Data.} As in \citep{mirzasoleiman2016fast}, we maximize this objective function on a randomly selected collection of 500 images from the Tiny Images `test set' data  \citep{tinyimages}, where each image is a 32 by 32 pixel RGB image.

\textbf{Movie recommendation experiment.}
The goal of a movie recommendation system is to recommend a diverse short list $S$ of movies that are likely to be highly rated by a user based on the ratings she has assigned to movies she has already seen. \citep{mirzasoleiman2016fast} propose that this goal can be translated into the following non-monotone submodular objective function:

\begin{align}
\label{eqn:movie}
f(S) = \sum_{i \in S} \sum_{j \in X} s_{i,j} -  0.95 \sum_{j \in S} \sum_{k \in S}s_{j,k}
\end{align}
where $X$ is the set of all movies and $s_{i,j}$ is a measure of the similarity between movies $i$ and $j$.

\textbf{Movie data.} Following \citep{mirzasoleiman2016fast}, we optimize this objective function on a randomly selected set of 500 movies from the MovieLens 1M dataset \citep{movielens}, which contains 1 million ratings by 6000 users on 4000 movies. Because each user has rated only a small subset of the movies, we adopt the standard approach and use low-rank matrix completion to infer each user's rating for movies she has not seen in a manner that is consistent with her observed ratings. As in \citep{mirzasoleiman2016fast}, we use this completed ratings matrix to compute the movie similarity measure $s_{i,j}$ by setting $s_{i,j}$ equal to the inner product of the column of raw movie ratings of movies $i$ and $j$. 

\textbf{Revenue maximization experiment.}
\citep{mirzasoleiman2016fast} also consider a variant of the influence maximization problem. Here, we can choose a subset of $k$ users of a social network who will receive a product for free in exchange for advertising it to their network neighbors, and the goal is to choose these users in a manner that maximizes revenue. More precisely, if we select the set $S$ of users to receive the product for free, then our revenue can be modeled via the following submodular objective function:

\begin{align}
\label{eqn:revenue}
f(S) = \sum_{i \in X \backslash S} \sqrt{ \sum_{j\in S} w_{i,j} }
\end{align}
where $X$ is the set of all users (nodes) in the network and $w_{i,j}$ is the network edge weight between users $i$ and $j$. Note that unlike the other objective functions, eqn. \ref{eqn:revenue} is monotone.

\textbf{Revenue maximization data.}
As in \citep{mirzasoleiman2016fast}, we conduct this experiment on social network data from the 5000 largest communities of the Youtube social network, which are comprised of 39,841 nodes and 224,234 undirected edges \citep{youtube}. Because edges in this data are unweighted, we randomly assign each edge a weight by drawing from the uniform distribution $U(0,1)$. We then optimize the revenue function on a randomly selected subset of 25 communities ($\sim 1000$ nodes).

\subsection{Implementation details}
The 8 plots in Fig. \ref{fig:rand_graphs} and \ref{fig:real_data} comparing the performance of \algone \ and \algtwo \ to alternatives show \emph{typical} performance. Specifically, with the exception of the plot depicting results for the movie recommendation and revenue maximization experiments, each plot was generated from a single run of each algorithm. For these runs, \algone \ and \algtwo \ were initialized with $r=10$, $\epsilon=0.3$, and $\OPT$ set to the value of the maximum value of a solution found by \textsc{Greedy}. This means that in practice, one could achieve higher values with \algone \ and \algtwo \ by running each algorithm multiple times in parallel (for which we do not incur an adaptivity cost) and picking the highest value $S$. For the movie recommendation and revenue maximization experiments, we noted that \algone \ was more sensitive to the $\OPT$ parameter. On the movie recommendation experiment, we noted a significant increase in the value of the solution $S$ returned by \algone \ as $\OPT$ was decreased from $f(S_{\textsc{Greedy}})$ to 0.7$f(S_{\textsc{Greedy}})$. On the revenue maximization experiment, we noted a significant increase in the value of the solution $S$ returned by \algone \ as $r$ was reduced to 5 and $\OPT$ was increased from $f(S_{\textsc{Greedy}})$ to 3.5$f(S_{\textsc{Greedy}})$, as this resulted in a more aggressive application of \filter. After exploring this behavior, we therefore set $\OPT$ to these better performing values and conducted one run of \algone \ each to produce the data for the movie recommendation and revenue maximization experiment plots.

Finally, we note that the plot lines for \textsc{P-Fantom} are less smooth than those for the other algorithms because unlike the other algorithms, \textsc{P-Fantom} does not construct a solution set $S$ only by adding elements to $S$. Specifically, \textsc{P-Fantom} works by iteratively building $S$ by calling a version of \textsc{Greedy} in which the element with the highest marginal contribution is only added to $S$ if its marginal value exceeds one of $|X|$ cleverly chosen thresholds, then paring $S$ by testing whether some subset of $S$ achieves higher value than $S$ itself. Therefore, unlike for the other algorithms we consider, running \textsc{P-Fantom} with a given constraint $k$ does not provide us with a single value of its partial solution in each round. Because it would be computationally costly to run \textsc{P-Fantom} for \emph{all} of these constraints (i.e. for all $\hat{k}\le k$) in order to generate values to plot against the other algorithms, we instead chose 10 equally spaced values in the interval $0< \hat{k}\le k$ and ran \textsc{P-Fantom} once for each.

\end{document}